\newcommand{\arxiv}[1]{#1}
\newcommand{\journal}[1]{}
\journal{\documentclass[english,preprint,JIP]{ipsj}
\bibliographystyle{ipsjunsrt-e}
}
 \gdef\xxxmark{%
   \expandafter\ifx\csname @mpargs\endcsname\relax %
     \expandafter\ifx\csname @captype\endcsname\relax %
       \marginpar{xxx}%
     \else
       xxx %
     \fi
   \else
     xxx %
   \fi}
 \gdef\xxx{\@ifnextchar[\xxx@lab\xxx@nolab}
 \long\gdef\xxx@lab[#1]#2{\textbf{[\xxxmark #2 ---{\sc #1}]}}
 \long\gdef\xxx@nolab#1{\textbf{[\xxxmark #1]}}
\newcommand{\defn}[1]{\textbf{\textit{\boldmath #1}}}
\let\epsilon=\varepsilon
\newcommand{\ie}{\textrm{i.e.,~}}
\def\Underline{\setbox0\hbox\bgroup\let\\\endUnderline}
\def\endUnderline{\vphantom{y}\egroup\smash{\underline{\box0}}\\}
\def\|{\verb|}
\journal{
\setcounter{volume}{26}%
\setcounter{number}{1}%
\setcounter{page}{1}

\received{2016}{3}{4}
\accepted{2016}{8}{1}
}
\newtheorem{theorem}{Theorem}
\newtheorem{problem}{Problem}
\newtheorem{lemma}{Lemma}
\begin{document}

\title{Continuous Flattening and Reversing \\ of Convex Polyhedral Linkages}

\journal{
\affiliate{mit}{MIT Computer Science and Artificial Intelligence Laboratory,
      32 Vassar St., Cambridge, MA 02139, USA}
\affiliate{meiji}{Meiji Institute for Advanced Study of Mathematical Sciences, Meiji University, Nakano, Tokyo 164-8525, Japan}
\author{Erik D. Demaine}{mit}[edemaine@mit.edu]
\author{Martin L. Demaine}{mit}[mdemaine@mit.edu]
\author{Markus Hecher}{mit}[hecher@mit.edu]
\author{Rebecca Lin}{mit}[ryelin@mit.edu]
\author{Victor Luo}{mit}[vluo@mit.edu]
\author{Chie Nara}{meiji}
[cnara@jeans.ocn.ne.jp]
}

\arxiv{%
\author[a]{Erik D. Demaine}
\author[a]{Martin L. Demaine}
\author[a]{Markus Hecher}
\author[a]{Rebecca Lin}
\author[a]{Victor H. Luo}
\author[b]{Chie Nara}
\affil[a]{MIT Computer Science and Artificial Intelligence Laboratory,
      32 Vassar St., Cambridge, MA 02139, USA, {\{edemaine,mdemaine,hecher,ryelin,vluo\}@mit.edu}}
\affil[b]{Meiji Institute for Advanced Study of Mathematical Sciences, Meiji University, Nakano, Tokyo 164-8525, Japan,
{cnara@jeans.ocn.ne.jp}
}%
\maketitle}

\begin{abstract}
We prove two results about transforming any convex polyhedron,
modeled as a linkage $L$ of its edges.
First, if we subdivide each edge of $L$ in half, then
$L$ can be continuously \emph{flattened} into a plane.
Second, if $L$ is equilateral and we again subdivide each edge in half,
then $L$ can be \emph{reversed}, i.e., turned inside-out.
A linear number of subdivisions is optimal up to constant factors, as we show
(nonequilateral) examples that require a linear number of subdivisions.
For nonequilateral linkages, we show that more subdivisions can be required:
even a tetrahedron can require an arbitrary number of subdivisions to reverse.
For nonequilateral tetrahedra, we provide an algorithm that matches this
lower bound up to constant factors: logarithmic in the aspect ratio.
\end{abstract}

\journal{
\begin{keyword}
Convex polyhedra, linkages, folding, flattening, reversing
\end{keyword}
\maketitle}

\section{Introduction}

Given a polyhedral surface (planar polygons glued together in 3D),
what shapes can we fold it into via a continuous motion
that avoids crossings and stretching?
A first goal is \defn{continuous flattening}
\cite[Section 18.1]{demaine2007geometric}:
put the entire surface in a single plane.
Continuous flattening is known to be possible for all
convex polyhedra \cite{AbelEtAl14,ItohEtAl2012},
orthogonal polyhedra \cite{DemaineEtAl15},
nonconvex polyhedra with countably many creases \cite{AbelEtAl21},
and various special cases with additional properties
\cite{HoriyamaEtAl15,MatsubaraNara20}.
All of these results for closed surfaces
use ``rolling'' creases which move over time,
because otherwise the Bellows Theorem prevents changing the volume.
A second goal is \defn{reversing} or \defn{turning inside-out} \cite{Maehara10}:
transforming the surface into its mirror image,
exchanging the identity of the two sides (in the case of orientable surfaces).
This is known to be possible for a family of polyhedral surfaces with boundary
based on tubes, notably using just finitely many creases \cite{Maehara10}.

In this paper, we consider continuous flattening and turning inside-out
applied to just the \emph{edges} of a polyhedral surface.
More precisely, a \defn{linkage} is an assembly of rigid segments (edges)
whose lengths must be preserved,
connected together by universal joints (vertices).
In particular, we focus on \defn{convex polyhedral linkages},
where the edges form the edge skeleton of a convex polyhedron.
Like polyhedral surfaces, we allow the addition of extra creases,
or in other words, \defn{subdivision} of the edges.
Continuous flattening and turning inside-out of linkages is easy
with rolling creases, so we allow only finitely many creases,
i.e., finite subdivision.
We also forbid proper crossing between edges
(as detailed in Section~\ref{sec:model}),
in contrast to some past work on turning polygons inside-out
\cite[Section 5.1.2]{demaine2007geometric}.

More precisely, we study on the following two problems
for convex polyhedral linkages:

\smallskip
\begin{problem}[Flattening]\label{prob:flatten}
  Given a linkage $L$ with initial state/configuration $S$,
  a \defn{flattening} is a continuous motion of $L$ from $S$
  to a \defn{flat state} where all vertices lie in the $xy$ plane,
  while preserving edge lengths and avoiding crossings.
  Which polyhedral linkages can be continuously flattened?

  To \defn{subdivide} an edge $AB$ in a linkage is to replace $AB$ with a new vertex, say $X$, and edges $AX$ and $XB$ such that $|AX|+|XB|=|AB|$.
  How many subdivisions do we need to continuously flatten a given linkage?
  Does it suffice to subdivide each edge $k$ times for some~$k$?
\end{problem}

\smallskip
\begin{problem}[Turning Inside-Out/Reversing]
\label{prob:inside-out}
  A \defn{reversing} or \defn{turning inside-out} motion of a linkage $L$
  with labeled vertices and initial state $S$
  is a continuous motion of $L$
  from $S$ to its reflection $S'$ through a plane,
  while preserving edge lengths and avoiding crossings.
  Which linkages admit such a motion?
  How many subdivisions do we need to continuously flatten a given linkage?
  Does it suffice to subdivide each edge $k$ times for some~$k$?
\end{problem}
\smallskip

More generally, we could ask when there are motions between \emph{every}
two states of the linkage, i.e., connectivity of the configuration space.
Such problems have been considered extensively
\cite[Chapter~6]{demaine2007geometric}, but not with subdivision.

\subsection{Our Results}
\noindent
We prove the following results:
\begin{enumerate}
\item A worst-case linear lower bound on the number of required subdivisions
  for flattening or turning inside-out a (nonequilateral) linkage
  (Section~\ref{sec:lower}).
\item Every convex polyhedral linkage can be continuously flattened if we
  subdivide each edge at its midpoint (Section~\ref{sc:s-ball}).
  This bound matches the lower bound up to constant factors.
\item Every equilateral convex polyhedral linkage can be turned inside-out
  if we subdivide each edge at its midpoint (Section~\ref{sc:i-o}).
\item Turning a (nonequilateral) tetrahedron inside-out requires a number
  of subdivisions that is logarithmic in its aspect ratio
  (matching upper and lower bounds).
  More generally, we establish a lower bound for all polyhedral linkages.
\end{enumerate}

\section{Definitions}\label{sec:model}

A \defn{linkage} \cite[Chapter 11]{demaine2007geometric}
is a graph paired with an assignment of lengths to its edges.
Each edge represents a rigid bar, and
each vertex acts as a joint about which incident edges can freely rotate.
A linkage is \defn{equilateral} if all its edge lengths are equal.
A \defn{[convex] polyhedral linkage} is a linkage obtained by taking
the edge skeleton of a [convex] polyhedron.

A \defn{state} or \defn{configuration} of a linkage specifies
a straight-line drawing of the graph in Euclidean space
where the lengths of the embedded segments match the assigned edge lengths.
A state is \defn{nontouching} if the drawing is in fact an embedding,
i.e., no edges touch except at shared endpoints.
We sometimes also allow a state to be \defn{self-touching},
meaning that edges can intersect; in this case, a state also specifies
the local connectivity in an infinitesimal neighborhood of each point
where multiple edges meet (including vertices and touching points)
as a topological drawing of the neighborhood within an infinitesimal ball;
see \cite{ConnellyDemaineRote02} for details.
Figure~\ref{fig:cross}~shows an example of two self-touching states of a
linkage with identical vertex positions but distinct edge orderings.
A \defn{motion} of a linkage is a continuum $\{S_t \mid t \in [0,1]\}$
of states that avoids \defn{crossings},
i.e., limits of states that approach two edges touching
match the topological drawings once they touch,
and if touching remains for a positive time,
the topological drawings are preserved up to homotopy.
\begin{figure}[htbp]
\centering
\includegraphics[width=\arxiv{.75}\columnwidth]
{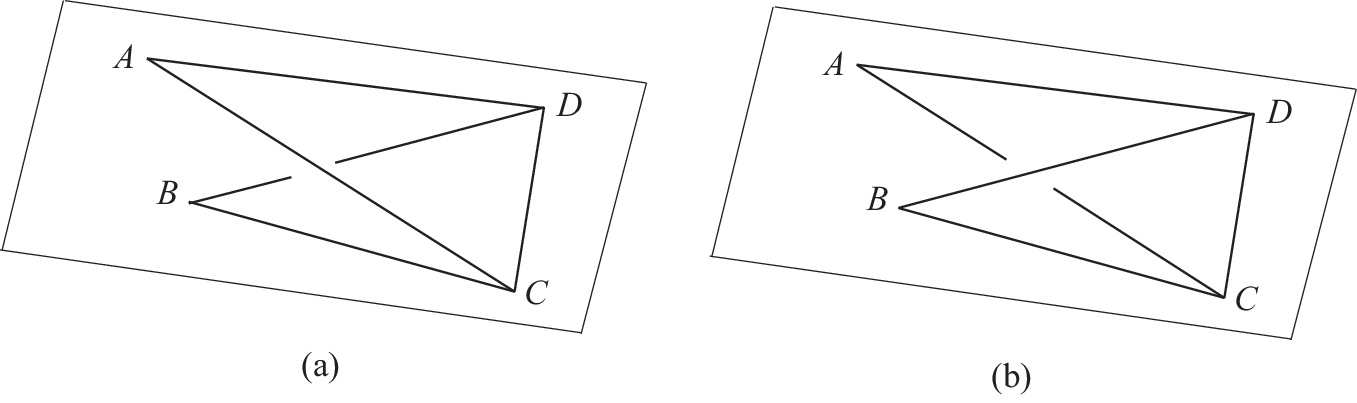}
\caption{Two distinct flat-folded states of a linkage where edges $AC$ and $BD$ are touching. In (a), $AC$ lies above $BD$ (\ie $AC \succ BD$), while in (b), $BD$ lies above $AC$ (\ie $BD \succ AC$). Transitioning between these states requires a $2\pi$ rotation about edge $CD$.}
\label{fig:cross}
\end{figure}

A simpler model for touching-but-not-crossing linkages is
\defn{sticky linkages}: once two edges touch,
they remain touching at the same points
(measured relative to their lengths).
Our motions will all follow this stronger model,
which clearly forbids crossing.

We are interested in \defn{continuously flattening}
a convex polyhedral linkage $L$, i.e., folding the linkage $L$ via a motion
that ends with a \defn{flat} state that lies entirely within a plane.
Notably, a continuous flattening $\{S_t \mid t \in [0, 1]\}$ must satisfy
the following properties:
\begin{enumerate}
\item $S_0$ is the given initial state of the linkage;
\item $S_t$ remains isomorphic to $L$, and all edges maintain their original lengths;
\item $S_1$ is flat, meaning that all its vertices lie in a plane; and
\item edges may touch but must not cross (e.g., sticky model).
\end{enumerate}

\section{Lower Bound on Subdivision}
\label{sec:lower}

\noindent
First we show that,
in the worst case, we need a linear number of edge subdivisions to flatten or
turn inside-out nonequilateral polyhedra:

\smallskip
\begin{theorem}
  There is an infinite family of convex polyhedral linkages that requires
  $\Omega(|E(G)|)$ subdivisions to flatten or reverse,
  where $G$ is the graph of the linkage.
\end{theorem}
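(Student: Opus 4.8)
The plan is to reduce the global statement to a purely local rigidity obstruction and then amplify it by a counting (charging) argument. The central observation is that a single non-degenerate tetrahedron, viewed as a linkage on $K_4$, can be neither flattened nor reversed unless one of its six edges is subdivided; I would then build a convex polyhedral family containing $\Omega(|E(G)|)$ such tetrahedra and argue that each subdivided edge can ``rescue'' only a constant number of them.

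Concretely, I would first isolate the following lemma. Suppose a sub-linkage consists of four vertices $w,a,b,c$ together with all six edges among them, realizing a non-degenerate tetrahedron (nonzero volume), and suppose none of these six edges is subdivided. Then throughout any motion the six pairwise distances stay constant, so by the Cayley--Menger determinant the unsigned volume of $w,a,b,c$ is a fixed positive constant. Hence these four vertices are never coplanar, which already rules out flattening. For reversing I would instead track the \emph{signed} volume: it is a continuous function of the time parameter whose absolute value is pinned to that same positive constant, so it never vanishes and therefore keeps a constant sign. Since the target state $S'$ is a global reflection, the signed volume of $w,a,b,c$ in $S'$ is the negative of its value in $S$, a contradiction. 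Thus both flattening and reversing force a subdivision among the six edges of every such ``tetrahedral cap.''

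Next I would supply the family. Starting from a simplicial convex polyhedron and erecting a shallow outward pyramid (stellating) on each face of a linear-size set of pairwise non-adjacent faces, I obtain a convex simplicial polyhedron $P$ in which every newly added apex $w$ has degree exactly $3$; its three neighbors $a,b,c$ are mutually adjacent, and, being a genuine vertex of a convex polyhedron, $w$ is non-coplanar with $a,b,c$. Thus each apex yields a non-degenerate tetrahedral cap $C_w$ on the six edges $wa,wb,wc,ab,bc,ca$, and the number of apices is $\Theta(|F|)=\Theta(|E(G)|)$. Let $D$ be the set of edges that must be subdivided to flatten or reverse $P$. By the lemma, $D$ must contain an edge of $C_w$ for every apex $w$, i.e.\ $D$ is a hitting set for the collection $\{E(C_w)\}_w$. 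A fixed edge $e=xy$ lies in $C_w$ only if $w\in\{x,y\}$ (at most two choices, each requiring that endpoint to have degree $3$) or if $w$ is the third vertex of one of the two faces containing $e$ (at most two more choices), so $e$ meets at most four caps. Counting incidences gives $4\,|D|\ge \#\{\text{apices}\}=\Theta(|E(G)|)$, hence $|D|=\Omega(|E(G)|)$, and the total number of subdivisions is at least $|D|$.

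I expect the main obstacle to be the geometric bookkeeping of the construction rather than the rigidity core, which the signed-volume argument dispatches cleanly. In particular I must guarantee that stellating a constant fraction of faces can be carried out while preserving global convexity (small enough pyramids on pairwise non-adjacent faces should suffice, since each perturbation stays local) and that the base triangulations genuinely scale to an infinite family with $\Theta(|E(G)|)$ independent faces available to stellate. A secondary point to verify carefully is the ``at most four caps per edge'' bound, which relies on each edge lying in exactly two faces of the polyhedron; any looseness there only affects the constant and not the $\Omega(|E(G)|)$ conclusion.
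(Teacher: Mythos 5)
Your proposal is correct and takes essentially the same approach as the paper: both erect shallow pyramids over the faces of a strictly convex triangulated polyhedron, invoke the rigidity of each resulting tetrahedral cap ($K_4$ sub-linkage) to force at least one subdivision per cap, and finish with a charging argument showing each edge belongs to $O(1)$ caps. The only differences are presentational --- you stellate an independent set of faces rather than all of them (affecting only constants), and you make explicit, via Cayley--Menger and signed volumes, the orientation obstruction to reversing that the paper compresses into its ``rigid and not flat by construction'' assertion.
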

\begin{proof}
    Start from a family of strictly convex triangulated convex polyhedra~$P$,
    such as geodesic domes.
    Subdivide each triangle $T$ of $P$ into three triangles
    by adding a central vertex $v_T$ slightly off the triangle~$T$
    (small enough to preserve strict convexity)
    and connecting $v$ to the three vertices of the triangle~$T$.
    If $P$ has $n$ triangles, then $P'$ has $3n$ triangles,
    dividing into $n$ triples.
    Each triple of triangles forms a tetrahedron, which is rigid and not flat
    by construction, so at least one of its edges must be subdivided in order
    for it to either flatten or turn inside-out.
    Some edges are shared by two tetrahedra, but only two,
    so we need at least $n/2$ subdivisions.
    By the Handshaking Lemma, $P$ has $3n/2$ edges,
    and $P'$ has $|E(G)| = 9n/2$ edges.
    Thus we need $|E(G)|/9 = \Omega(|E(G)|)$ subdivisions.
\end{proof}

\section{Flattening Convex Polyhedral Linkages} \label{sc:s-ball}

\noindent We provide an algorithm for flattening that matches the linear lower bound above.
More precisely, we establish the following.

\smallskip
\begin{theorem} \label{th:convex}
Given a convex polyhedral linkage, subdividing every edge at its midpoint enables continuous flattening.
Moreover, the resulting linkage can be continuously moved to lie in a straight line.
\end{theorem}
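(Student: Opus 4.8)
The plan is to pick an interior point $O$ of the convex polyhedron $P$ and use two phases: first radially \emph{contract} all original vertices to $O$ (letting each subdivided edge ``bloom'' outward by folding at its midpoint), and then \emph{comb} the resulting spikes so that they all lie along a single ray from $O$. The key tool throughout is the \emph{radial projection} $\pi\colon x\mapsto (x-O)/|x-O|$ onto the unit sphere $S^2$ centered at $O$: because $P$ is convex and $O$ is interior, $\pi$ sends the edge skeleton to a non-crossing embedding of the graph $G$ on $S^2$, and any straight segment not through $O$ that subtends angle less than $\pi$ projects onto the great-circle arc between its projected endpoints. Since two segments whose $\pi$-images are disjoint (except at shared projected endpoints) cannot meet in space, maintaining disjointness of projections is a clean certificate for the absence of crossings.

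For the first phase, move each original vertex along its ray by $V_i(t)=O+(1-2t)(V_i-O)$ for $t\in[0,\tfrac12]$, so that $V_i(0)=V_i$ and $V_i(\tfrac12)=O$. Since $|V_i(t)V_j(t)|=(1-2t)\,\ell_{ij}\le \ell_{ij}$ for every edge $V_iV_j$ of length $\ell_{ij}$, the two half-edges $V_iM_{ij}$ and $M_{ij}V_j$ (each of length $\ell_{ij}/2$) can always be realized; I place the midpoint $M_{ij}(t)$ in the plane through $O,V_i,V_j$, on the bisector of $V_i(t)V_j(t)$ and on the side away from $O$. At $t=0$ this is exactly the original edge midpoint, and as $t\to\tfrac12$ it pops out to distance $\ell_{ij}/2$ from $O$. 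Because $M_{ij}(t)$ lies in the plane $OV_iV_j$, the whole folded edge $V_i$--$M_{ij}$--$V_j$ projects onto the \emph{same} great-circle arc as the original straight edge; hence the projections of distinct edges stay disjoint (meeting only at shared endpoints) for all $t$, and no crossing occurs.

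At $t=\tfrac12$ all original vertices coincide at $O$, and the linkage becomes a collection of \emph{spikes}: each edge is now a doubled segment from $O$ out to $M_{ij}$ of length $\ell_{ij}/2$, pointing in the direction $\pi(M_{ij})$, and these directions are pairwise distinct (distinct edges have disjoint projected arcs, hence distinct arc midpoints). In the second phase I fix a generic unit direction $\hat u$, chosen so that no spike points along $-\hat u$, and retract every spike direction to $\hat u$ along its meridian of $S^2$. This retraction keeps the directions pairwise distinct at all times before the end, so the spikes, which share only the point $O$, never cross; in the limit they all align along the ray from $O$ in direction $\hat u$. The final state has every vertex on that ray, \ie in a straight line, which in particular is flat, establishing both claims.

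The main obstacle is crossing-avoidance, and both phases are built around it: the radial-projection invariant certifies phase one, and the ``keep directions distinct'' invariant certifies phase two. Two points need care. First, the instant $t=\tfrac12$ is a highly self-touching state (all original vertices at one point with many spikes emanating); this is legitimate in the self-touching model of Section~\ref{sec:model} once we record the local rotation system at $O$, and it is reached as a limit of projection-disjoint states. Second, the genericity of $\hat u$ rules out an ill-defined meridian retraction of an antipodal spike, and finitely many spikes make such a choice possible. Finally, aligning equal-direction spikes of different lengths only makes them overlap (touch) along the common ray rather than cross, so the endpoint is a valid flat, collinear configuration.
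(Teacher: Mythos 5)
Your construction is the same as the paper's (shrink every vertex toward an interior point $O$ while each midpoint pops outward in the plane $OV_iV_j$, producing a spiky ball that is then combed straight), but your crossing-avoidance certificate for the shrinking phase has a genuine gap. From ``$M_{ij}(t)$ lies in the plane $OV_iV_j$'' you conclude that the folded edge projects onto the \emph{same great-circle arc} as the original edge; coplanarity with $O$ only puts the projection on the same great \emph{circle}. Whether $\pi(M_{ij}(t))$ stays inside the arc between $\pi(V_i)$ and $\pi(V_j)$ is a separate question, and in general it does not. Writing $s=1-2t$, your apex sits on the perpendicular bisector of $V_i(t)V_j(t)$ at height $(\ell_{ij}/2)\sqrt{1-s^2}$, so as $t\to\frac{1}{2}$ its direction from $O$ tends to the unit vector $\hat d$ perpendicular to $V_j-V_i$ in the plane $OV_iV_j$, pointing away from $O$. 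The ray from $O$ in direction $\hat d$ meets the line $V_iV_j$ exactly at the orthogonal foot of $O$, so $\hat d$ lies in the radial cone of the edge if and only if that foot lies inside the segment, \ie only if the triangle $OV_iV_j$ is non-obtuse at its base vertices. When it is obtuse the projection spills outside the arc and the invariant collapses. Concretely, take the thin convex body with vertices near $A=(0,0)$, $B=(10,0)$, $C=(9.5,0.1)$ (thickened slightly into a tetrahedron to make a genuine polyhedron) and $O=(9,0.03)$ in its interior: for the edge $BC$ the limiting spike direction makes an angle of about $79^\circ$ with the $x$-axis, whereas the arc of $BC$ seen from $O$ spans roughly $[-2^\circ,8^\circ]$, and the spilled projection lands inside the projected cone of edge $AC$. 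So ``projections of distinct edges stay disjoint'' is false as stated; to rescue it you would have to choose $O$ so that every edge sees its perpendicular foot from $O$ inside itself, and you neither impose nor prove the existence of such a point.

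The paper certifies the very same motion with a different, coordinate-orthogonal invariant: each folded isosceles triangle stays in the slab bounded by the planes through $V_i(t)$ and $V_j(t)$ orthogonal to the edge, on the outside of the shrinking similar copy $P'$ of $P$, and these outer orthogonal slabs are pairwise disjoint (they are nearest-point regions of the convex body $P'$). That certificate does not care where the apex points radially, which is precisely what breaks your argument. Your second phase --- retracting all spike directions simultaneously along meridians toward a generic common direction $\hat u$ --- genuinely differs from the paper's iterative smallest-angle rotation into the plane, and it is fine on its own terms (proportional shrinking of polar angles at fixed azimuth keeps distinct directions distinct until the final instant, and spikes sharing only $O$ cannot cross); but it operates on a spiky ball that your phase one has not validly reached.
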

\begin{proof}[Proof]
Let $P$ be the convex hull of a convex polyhedral linkage $L$.  Figure \ref{fig:2pf} shows An example of continuously folding a convex polygonal linkage into a spiky ball by subdividing all edges in half. The algorithm is as follows.

\begin{figure}[htbp]
\centering
\includegraphics[width=\arxiv{.75}\columnwidth]
{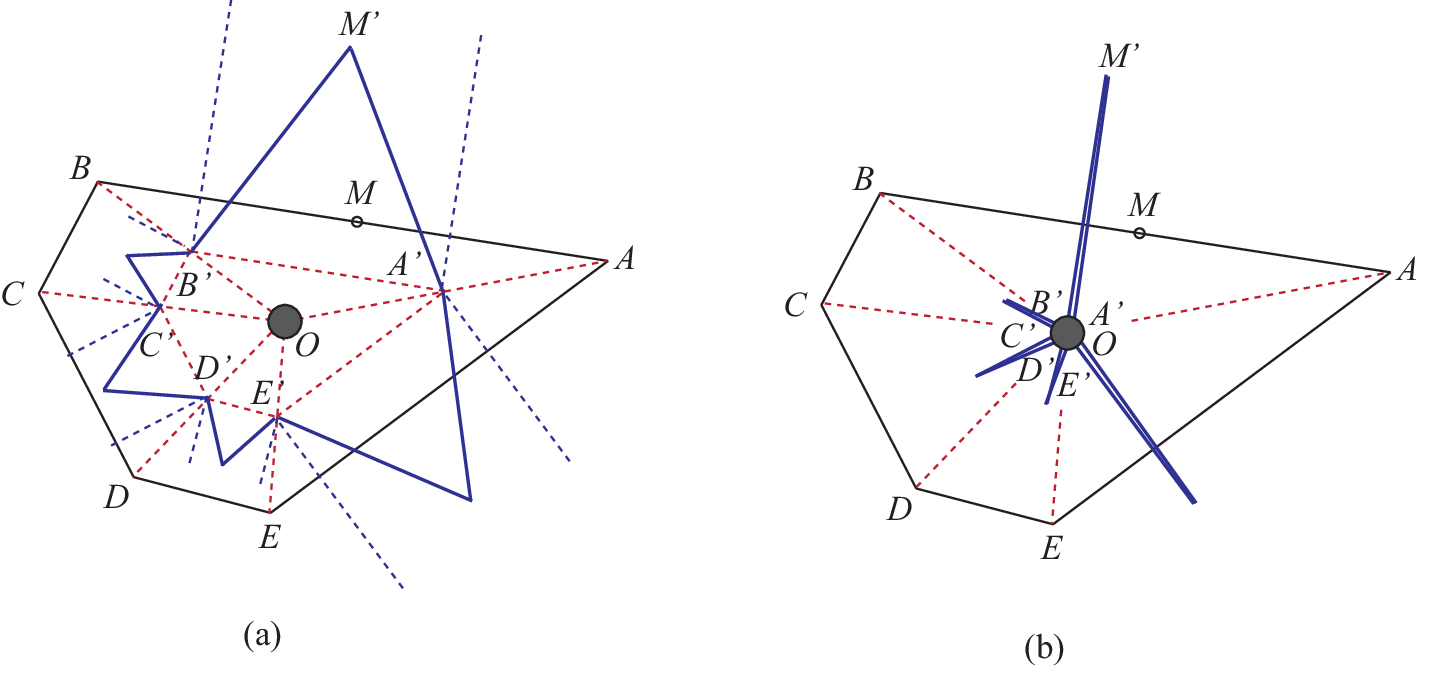}
\caption{An example of continuously folding a convex polygonal linkage into a spiky ball by subdividing all edges in half: (a) the original pentagonal linkage $ABCDE$ with a fixed point $O$, the midpoint $M$ of the edge $AB$, and its intermediate folding with vertices $A', B', C', D', E'$ and $M'$ corresponding to $A, B, C, D, E$ and $M$, respectively; and (b) the spiky ball with center $O$ reached by $A', B', C', D'$ and $E'$.}
\label{fig:2pf}
\end{figure}

\begin{enumerate}
\item Choose any point $O$ in the interior of $P$ or the interior of any face.
\item Shrink $L$ toward $O$ by moving all vertices of $P$ toward $O$ and folding all edges in half at their midpoints such that, at each moment, the convex figure $P'$ of moved vertices of $P$ is similar to $P$, and each edge $e$ of $P$ is folded in half and sticks outside of the triangle composed of $e$ and $O$ in the plane $OAB$. Since $P$ is convex and the folded figure of $e$ by the midpoint comprises an isosceles triangle, there are no collisions in the above motion. Indeed, each edge is between two dashed blue orthogonal slabs, as drawn in Figure~\ref{fig:2pf} (a), which are disjoint and resulting triangles remain inside (see also Figure~\ref{fig:2pf} (b)).
\item When all vertices reach $O$, we get a {\em spiky ball} that can be flattened and continuously moved to a straight line.
This works as follows. First, we translate to make the center at $O$. Then, we iteratively take the spike with the smallest angle to the target axial ($xy$) plane and directly rotate it to the plane. Since this was the spike with the smallest angle, it does not hit any other spikes, except possibly touching the plane. We repeat until reaching a folded state.
We can move to a line by using the same method.
\end{enumerate}~\\[-3.75em]
\end{proof}

We call the point $O$ in the above proof the \defn{center} of the spiky ball. Figure \ref{fig:linf} (a) shows a continuous process for a tetrahedral linkage flattened via a spiky ball. Figure \ref{fig:linf} (b) shows another continuous flattening process to obtain Figure \ref{fig:cross} (a) as a subset, that is more efficient in that it requires only a single subdivision, but it cannot be transformed to a straight line without additional subdivision of edges.

\begin{figure}[htbp]
\centering
\includegraphics[width=\arxiv{.75}\columnwidth]{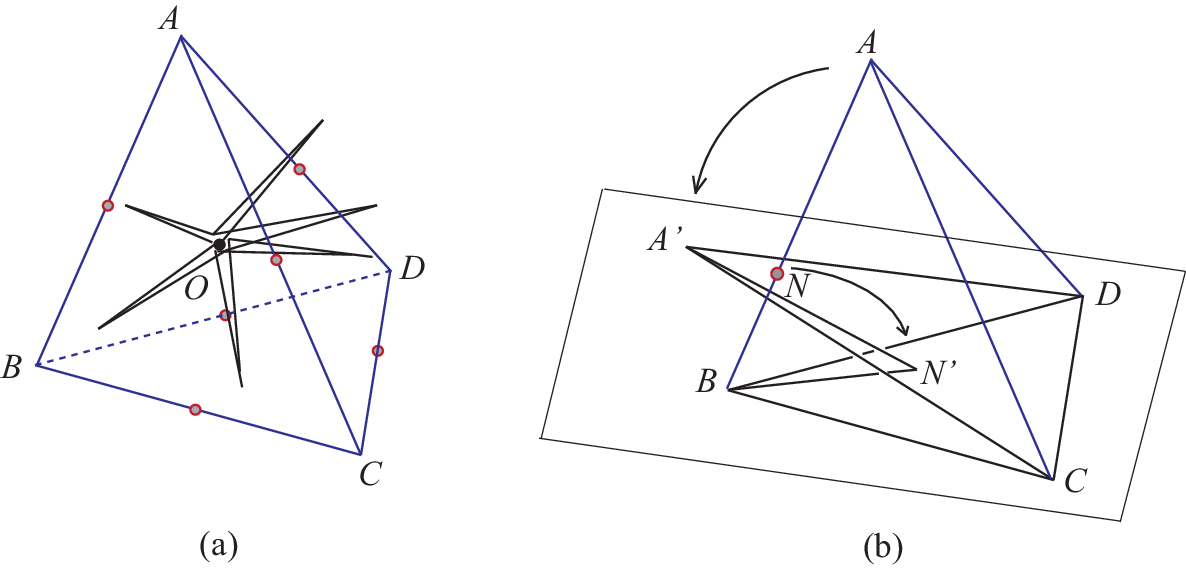}
\arxiv{.75}\caption{An example of flattening a tetrahedral linkage in two ways: (a) subdividing all edges in half, and (b) subdividing only one edge to obtain Figure \ref{fig:cross} (a) as a subset.}
\label{fig:linf}
\end{figure}

\section{Turning Equilateral Linkages Inside-Out} \label{sc:i-o}
\noindent In this section, we work on the problem of continuous inside-out (reversing), that is, we will establish the following result. %

\smallskip
\begin{theorem} \label{th:i-o}
Given an equilateral convex polyhedral linkage, subdividing every edge at its midpoint enables continuous reversing, that is, inside-out.
\end{theorem}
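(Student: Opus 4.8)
The plan is to reduce reversing to connecting a single highly symmetric ``hub'' state to its own mirror image, and then to exploit a reflection. Fix a reflection $\rho$ through a plane, so that the target reflected state is $S' = \rho(S)$, and place the hub's center on this plane so that $\rho$ fixes it. Since $\rho$ is an isometry, it carries any valid (length-preserving, non-crossing) motion to a valid motion, and motions may be reversed and concatenated; hence it suffices to move $L$ from $S$ to some state $B$ and then to connect $B$ to $\rho(B)$ by a valid motion. The full reversing motion will then read $S \to B \to \rho(B) \to S'$, where the first leg is a chosen collapse, the last leg is the $\rho$-image of that collapse run backwards, and only the middle leg requires genuinely new work.

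For the hub $B$ I would reuse the spiky ball from the proof of Theorem~\ref{th:convex}: shrink $L$ toward an interior center $O$ (placed on the mirror plane) while folding every subdivided edge at its midpoint, until all original vertices coincide at $O$ and each edge becomes a thin ``needle'' --- two half-edges lying on top of one another, running from $O$ out to the image of its midpoint and back. Because $L$ is \emph{equilateral}, every needle has the same length $s/2$, so all tips lie on one sphere of radius $s/2$ about $O$, and distinct needles meet only at $O$. The mirror image $\rho(B)$ is the spiky ball with the same labeled needles but with tip directions $d_i$ replaced by $\rho(d_i)$, and it is again centered at $O$. Connecting $B$ to $\rho(B)$ therefore amounts to sliding the labeled tips from $\{d_i\}$ to $\{\rho(d_i)\}$ on this common sphere: two radial needles intersect only at $O$ as long as their tips stay distinct, so any tip motion keeping the tips pairwise distinct is automatically non-crossing, and in $3$D the configuration space of distinct labeled points on a sphere is connected, so the required tip permutation can be realized.

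The main obstacle is crossing-avoidance, and this is exactly where equilaterality is essential. Collapsing to a single center forces every half-edge to pass through $O$, so I must verify that the needle rearrangement --- and the gluing of the forward collapse to its reflected, reversed copy at $B$ --- never forces the orientation reversal at a touching point that, as in Figure~\ref{fig:cross}, would demand a forbidden $2\pi$ sweep. Congruence of all needles is what makes the hub genuinely symmetric (all tips on one sphere, needles interchangeable), so that the tip motion realizing the reflection can be chosen to respect the sticky-model ordering data at the shared center $O$ consistently on both legs. I expect the delicate bookkeeping to be precisely the local topological drawing at $O$, where all needles meet, and checking that the prescribed ordering there is preserved throughout the rearrangement; the fact that this argument breaks for unequal edge lengths is consistent with the later lower bound showing that non-equilateral tetrahedra require more than a single midpoint subdivision to reverse.
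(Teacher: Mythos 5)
There is a genuine gap, and it sits exactly where you deferred the ``delicate bookkeeping.'' Your reduction $S \to B \to \rho(B) \to S'$ is valid only if the configurations at the junctions agree \emph{as self-touching states}, and at the spiky ball $B$ this data is everything: all original vertices of $P$ coincide at $O$, so the state includes the topological drawing, inside an infinitesimal ball at $O$, of the collapsed vertex cluster together with the stubs of every half-edge leaving it. Your middle leg (rotating needles while every vertex stays at $O$) moves the tips from $\{d_i\}$ to $\{\rho(d_i)\}$ but merely drags that drawing along; it terminates at a state whose infinitesimal cluster is still an isotoped copy of $P$'s vertex arrangement with its original handedness. The state you must hand off to --- the start of the reflected, time-reversed collapse --- is $\rho(B)$, whose infinitesimal cluster is the \emph{mirrored} arrangement with the same pinned tips. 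Nothing in your argument (radial needles pairwise meeting only at $O$, plus connectedness of the configuration space of distinct labeled points on $S^2$) acts on this data, and reversing the handedness of the collapsed cluster with its stubs pinned is not bookkeeping: it is the reversing problem itself, reappearing in the infinitesimal ball.

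A clean way to see the gap is fatal rather than cosmetic is that your argument never truly uses equilaterality: two radial needles with distinct directions intersect only at $O$ regardless of their lengths, the collapse of Theorem~\ref{th:convex} works for \emph{every} convex polyhedral linkage, and the reflection-and-concatenation formalism is general. So if your middle leg were a valid motion ending at $\rho(B)$, every convex polyhedral linkage would reverse after one midpoint subdivision per edge --- contradicting the lower bound of Section~\ref{sc:i-o-noneq}, by which a nonequilateral tetrahedron of aspect ratio $L$ needs $\Omega(\log L)$ subdivisions. The paper's proof is structured precisely to avoid your degenerate hub: it stops at a \emph{near} spiky ball, organizes the vertices into breadth-first-search layers split into three groups via $3$-connectivity and Menger's theorem, reshapes the linkage into a stack of near triangular prisms, and reverses the stack layer by layer using Lemma~\ref{lem:pyramid}. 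That lemma's motion --- passing the apex through the base triangle, which needs the unit-edge condition $|AB|=|AC|=|AD| < 2|BC|/\sqrt{3}$ --- is the mechanism that actually flips orientation, and it is exactly the mechanism your proposal is missing.
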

\smallskip

Before establishing an algorithm that is crucial for this proof, we focus on two special cases that will be key in the algorithm.

\subsection{Tetrahedral Linkages}

\noindent The following lemma plays a key role in the algorithm and is therefore essential for the proof of Theorem \ref{th:i-o}.

\smallskip
\begin{lemma}\label{lem:pyramid}
Given a tetrahedral linkage $L$ with an equilateral triangular base $BCD$ and the peak $A$ such that $|AB|=|AC|=|AD| < 2 |BC|/\sqrt3$, subdividing every edge, connected to $A$, at its midpoint enables continuous reversing, that is, turning it inside-out.
\end{lemma}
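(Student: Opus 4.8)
The plan is to reverse the apex-subdivided tetrahedron by exploiting a single degree of freedom: the height of the apex above the base plane. We keep the equilateral base $BCD$ fixed in place throughout the motion, and move the apex $A$ (together with the three midpoints $M_{AB}, M_{AC}, M_{AD}$ of the subdivided apex-edges) along the vertical line through the centroid $O$ of triangle $BCD$. The target state is the reflection of $L$ through the base plane, which (since the base is fixed) is exactly the configuration with $A$ pushed to the mirror-image height $-h$ where $h$ is the original apex height. So the entire reversing motion amounts to continuously pushing the apex from height $+h$ down through the base plane to height $-h$, while the three subdivided edges flex to absorb the changing distance from $A$ to each base vertex.

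**First I would** set up the geometry explicitly. By symmetry, as long as $A$ stays on the vertical axis through $O$, the three subdivided edges $A\,M_{AB}\,B$, $A\,M_{AC}\,C$, $A\,M_{AD}\,D$ remain congruent, so I only need to track one of them, say the edge to $B$. Each has two rigid halves of length $|AB|/2$, hinged at the midpoint, spanning from the moving apex $A=(0,0,z)$ to the fixed base vertex $B$. The straight-line distance from $A$ to $B$ is $d(z)=\sqrt{r^2+z^2}$ where $r=|OB|=|BC|/\sqrt3$ is the circumradius of the base. Because each subdivided edge folds at its midpoint into an isosceles bracket, the midpoint $M_{AB}$ bulges outward a distance $\sqrt{(|AB|/2)^2-(d(z)/2)^2}$ from the chord $AB$; this is real precisely when $d(z)\le|AB|$, i.e. when $A$ is close enough to $B$. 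The key algebraic fact I must verify is that the hypothesis $|AB|<2|BC|/\sqrt3=2r$ guarantees $d(z)\le|AB|$ holds throughout the sweep $z\in[-h,+h]$: since $d(z)$ is maximized at the endpoints where $z=\pm h$, and $d(h)=|AB|$ (the original edge length!), the chord never exceeds the folded edge's span, so the midpoint bracket is always geometrically realizable and can flex continuously as $z$ varies. The inequality $|AB|<2r$ is what ensures the base triangle is "wide" enough that the vertical line through $O$ can be reached by the folding edges.

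**The hard part will be** the non-crossing / embedding condition: verifying that as the apex descends through the base plane the three folded brackets neither pass through the base triangle's interior improperly nor collide with one another, and that the apex itself clears the base plane without a forbidden crossing. Here I would direct the midpoint brackets to fold radially \emph{outward} (away from the axis $O$), so that the three folded edges occupy three disjoint angular sectors around the axis and stay outside a thin cylinder containing the base edges; the $120^\circ$ rotational symmetry keeps them separated from each other. The genuinely delicate instant is $z=0$, when $A$ lies in the base plane at the centroid $O$: I must confirm the configuration is a legitimate self-touching (not crossing) state in the sticky model, specifying the correct cyclic edge-ordering around $O$ so that the passage is realized as an allowed topological drawing rather than a puncture of the base. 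I would handle this by checking that an arbitrarily small outward bulge of each bracket keeps the apex strictly on one side except at the single touching moment, so the limit-of-states condition in the definition of motion is met.

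**Finally I would** assemble these pieces into a single continuous parametrized motion $z(t)=h(1-2t)$ for $t\in[0,1]$, where at each $t$ the apex sits at height $z(t)$ on the axis and each subdivided edge takes its uniquely determined outward-folded bracket shape of span $d(z(t))$. Continuity in $t$ is immediate from continuity of $d(z)$ and of the bracket geometry; edge lengths are preserved by construction since each half-edge is rigid; and the endpoints $z(0)=+h$, $z(1)=-h$ give exactly the initial state and its reflection through the base plane. This realizes the turning-inside-out, completing the proof of the lemma.
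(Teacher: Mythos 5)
Your proposed motion is the same as the paper's (fix $BCD$, slide $A$ along the perpendicular through the centroid $O$ to its mirror image, letting each subdivided apex-edge flex at its midpoint inside the vertical plane through $O$ and its base vertex), but your treatment of the two points you yourself flag as essential is flawed. First, you misattribute the role of the hypothesis $|AB|<2|BC|/\sqrt3$: you claim it is needed to guarantee $d(z)\le|AB|$, yet your own computation shows this holds automatically, since $d(z)=\sqrt{r^2+z^2}$ is maximized at $z=\pm h$ where it equals exactly $|AB|$. Realizability of the bracket uses no hypothesis at all, and consequently the hypothesis is never actually invoked anywhere in your collision argument --- which is the only place it is needed.

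Second, the step where it is needed is precisely the one you hand-wave. ``Fold radially outward'' is not a continuous choice of branch: with $O$ at the origin, axis vertical, and $B=(r,0,0)$, the midpoint must lie at distance $|AB|/2$ from both $B$ and $A(z)$, giving two solutions on either side of the chord $A(z)B$ within the vertical plane; the branch that is outward for $z>0$ becomes inward for $z<0$, and literal ``always outward'' forces the midpoint to jump from $(r/2,0,h/2)$ to $(r/2,0,-h/2)$ as $z$ crosses $0$ --- a discontinuity of size $h$. Whichever continuous branch you commit to, during half of the motion the bulge points \emph{toward} the axis, and that is exactly where the hypothesis earns its keep: every point of the rigid half-edge $M_{AB}B$ lies within $|AB|/2$ of $B$, so the condition $|AB|/2<|OB|=|BC|/\sqrt3$ keeps that half-edge at horizontal distance at least $|OB|-|AB|/2>0$ from the axis; since the three folded edges are confined to half-planes $120^\circ$ apart whose common boundary is the axis, they can then only ever meet at the shared apex $A$. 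This inequality is the paper's entire (one-line) collision argument, and it is missing from your proposal. A minor further point: your worry about ``puncturing the base'' at $z=0$ is vacuous, because the linkage has no faces --- the apex crosses the base plane at the centroid, at distance equal to the inradius from the fixed base edges, so no touching occurs there at all.
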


\begin{proof}
Let $L$ be the linkage given in the above and $O$ be the orthogonal foot of  $A$ onto the triangle $BCD$  (see Figure~\ref{fig:py-i-o} (a)). Fix the triangle $BCD$. Move the peak $A$ to its mirror image $A'$ about the triangle $BCD$ along the line segment $AA'$. Simultaneously, rotate the midpoints of the edges $AB, AC$ and $AD$ about $B, C$ and $D$, respectively, along circular arcs of radius $|AB|/2=|AC|/2=|AD|/2$ in the planes determined by $O$ and $AB, AC$ and $AD$, respectively. No collisions occur during the motion, because the conditions
$|AB|=|AC|=|AD| < 2 |BC|/\sqrt3$ and $|AO|=|BO| =|CO|= |BC|/\sqrt3$ yield $|BO| > |AB|/2$, $|CO| > |AC|/2$, and $|DO| > |AD|/2$. Note that the traces of midpoints of $B, C$, and $D$
 pass through their destinations which are the mirror images of midpoints about the triangle $BCD$, and then come back there (see Figure~~\ref{fig:py-i-o} (b)--(d)).
\end{proof}

\begin{figure}[htbp]
\centering
\includegraphics[width=\arxiv{.75}\columnwidth]{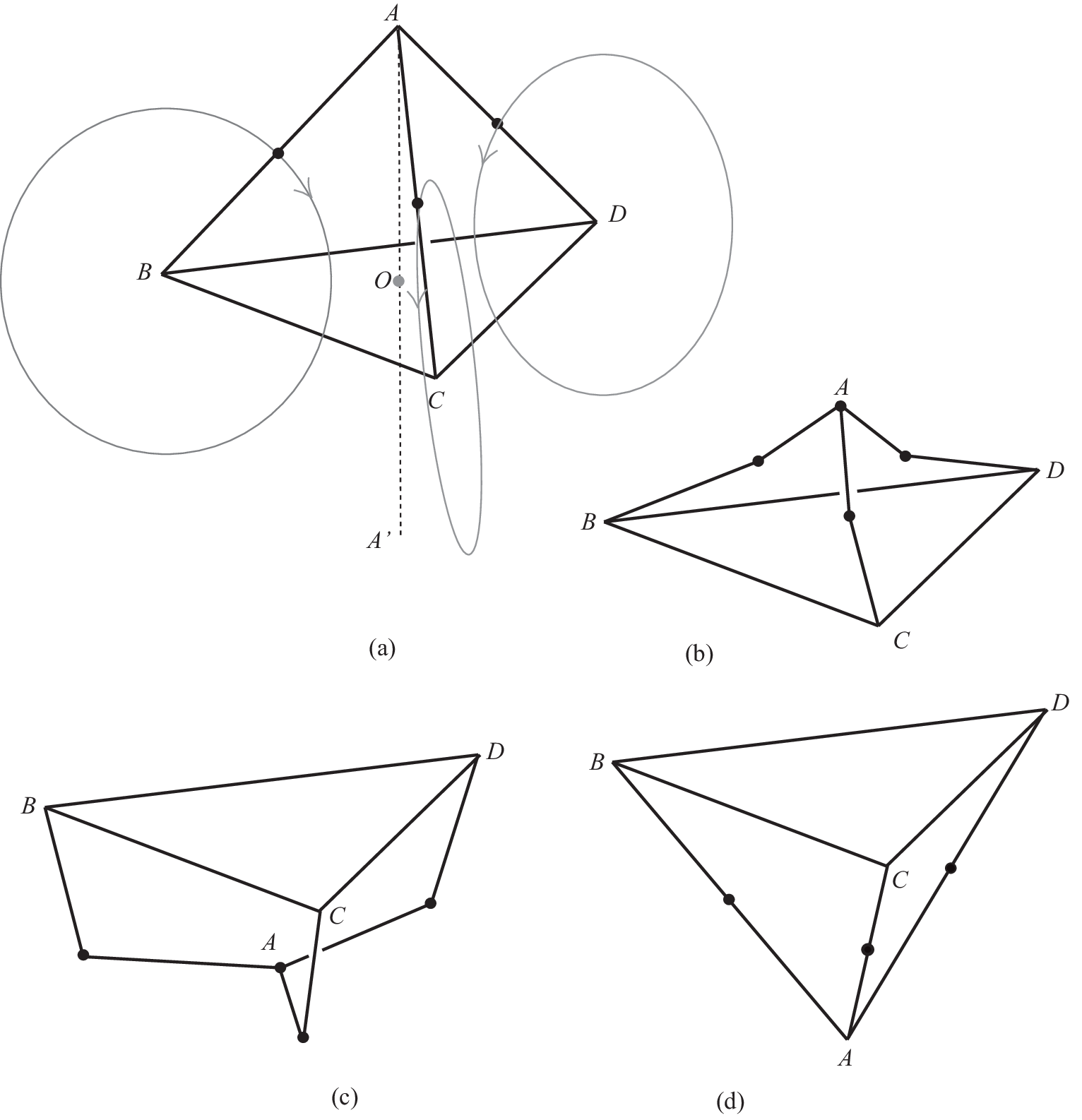}
\caption{(a) An example of a tetrahedral linkage with an equilateral base $BCD$ and the peak $A$ such that $|AB|=|AC|=|AD| < 2 |BC|/\sqrt3$; (b) Motions of $A$ and midpoints of $AB, AC, AD$ in the beginning; (c) The figure of the linkage when midpoints reach on the cylinder of radius $|AB|/2$ with the $AA'$-axis (where $A'$ is the target location of $A$); (d) The final figure of $L$ which is the mirror image of the original.
\arxiv{.75}}
\label{fig:py-i-o}
\end{figure}

\noindent {\bf Remark}. In the above lemma, the continuous motion of inside-out can be applied to more general tetrahedron $ABCD$ satisfying the following two conditions:

(1) the orthogonal foot $O$ of $A$ on the plane including the triangle $BCD$ is in the interior of the triangle,

(2) $|AB|/2 < |OB|, | AC|/2 <2|OC|, \rm{and} \, |AD|/2 < |OD|$.

\subsection{Triangular Prisms}

\noindent Now, with this result at hand, we show a method of turning inside-out a stack of triangular prisms,
which is the core of what we do in the general case.

\smallskip
\begin{lemma}\label{lem:prism}
Given the linkage consisting of a stack of $k$ equilateral triangles connected by $k-1$ prisms (not necessarily equilateral), totaling $3k$ vertices, as shown in Figure~~\ref{fig:prism-i-o} (a). Then, subdividing every edge except the base $v_1v_2v_3$ at its midpoint enables continuous reversing, that is, turning it inside-out.
\end{lemma}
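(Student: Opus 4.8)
The plan is to realize the reversal as the reflection through the horizontal plane $\Pi$ containing the fixed base $v_1v_2v_3$, so that each layer triangle $T_i$ travels from its initial height $z_i$ to $-z_i$ while keeping the same horizontal footprint and the same adjacency order (a reflection, not a rotation). I would keep every triangle horizontal throughout and drive the heights by the single linear schedule $h_i(t)=(1-2t)\,z_i$. The point of this schedule is that $h_i(t)=h_j(t)$ forces $(1-2t)(z_i-z_j)=0$, so any two distinct layers share a common height only at the single instant $t=\tfrac12$; away from $t=\tfrac12$ the layers occupy pairwise distinct horizontal planes and cannot meet. The antisymmetry $h_i(1-t)=-h_i(t)$ guarantees that the state at $t=1$ is exactly the $\Pi$-reflection of the initial state, i.e.\ the desired reversal, provided the whole motion is built $\Pi$-antisymmetrically under $t\mapsto 1-t$.

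The freedom I use to survive the one dangerous instant is to shrink each non-base triangle toward the axis as it passes the base plane, folding its (subdivided) triangle edges exactly as in the spiky-ball construction of Theorem~\ref{th:convex}, and folding each side edge at its midpoint to absorb the changing vertical span. Writing $\rho_i(t)$ for the circumradius of $T_i$ (with $\rho_1\equiv R$, the unshrunk base), the side edge joining consecutive corners has span $\sqrt{(\rho_i-\rho_{i+1})^2+(1-2t)^2\ell_i^{\,2}}$, and since $(1-2t)^2\le 1$ this is at most the edge length $\ell_i$ as soon as $|\rho_i(t)-\rho_{i+1}(t)|\le 2\ell_i\sqrt{t(1-t)}$. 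Thus taking $\rho_i(t)=R-(i-1)\,\delta\cdot 2\sqrt{t(1-t)}$ with $\delta\le\min_i \ell_i$ and $\delta<R/k$ keeps every side edge foldable for all $t$, leaves each triangle at a distinct radius at $t=\tfrac12$ so the layers are nested concentric frames there, and returns every radius to $R$ at $t=1$. This single choice simultaneously settles the reach constraint on the side edges and the only triangle--triangle conflict.

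The main obstacle is the remaining no-crossing bookkeeping at and near $t=\tfrac12$, where all layers crowd toward $\Pi$: although the layers are nested in radius, the shrunk triangles are not rigid similar copies but folded frames whose mid-edge ``knees'' and whose folded side-edge knees protrude, so I must route these protrusions into empty space and keep consecutive nested frames disjoint. I would fold each triangle-edge knee inward, toward the axis and into the empty interior of the frame, and each side-edge knee within its radial half-plane, and then certify disjointness by the same convexity / disjoint-slab argument used in Theorem~\ref{th:convex}: each folded edge stays trapped between two parallel slabs about its chord, and choosing $\delta$ small makes the radial gaps between successive frames exceed the bounded protrusion depths. The tightest case is the innermost conflict with the fixed base $T_1$ at radius $R$, for which I would invoke the pass-through motion of Lemma~\ref{lem:pyramid} (an apex passing through the interior of a fixed triangle) to certify that the adjacent layer crosses $\Pi$ without touching the base frame.

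Finally, once the motion has been defined $\Pi$-antisymmetrically and verified crossing-free on $[0,\tfrac12]$, the antisymmetry extends it to $[\tfrac12,1]$ for free and delivers the reflected state, so the stack is turned inside-out, completing the proof.
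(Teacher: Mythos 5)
Your plan is genuinely different from the paper's proof --- the paper reverses the stack \emph{sequentially} from the top, first collapsing each top triangle to a near-point on the central axis (its edges folded at midpoints and hanging down) and then passing that near-point through the triangle below via the motion of Lemma~\ref{lem:pyramid} --- and the place where you diverge is exactly where your argument breaks: the no-crossing analysis near $t=\tfrac12$, when all $k-1$ moving layers cross the base plane at once. Your central claim, ``choosing $\delta$ small makes the radial gaps between successive frames exceed the bounded protrusion depths,'' is quantitatively backwards. The radial gaps are $\Theta(\delta)$, but the protrusions do not shrink proportionally: a triangle edge of fixed length $s=\sqrt3\,R$ whose corners sit at radius $R-(i-1)\delta$ spans a chord of length $\sqrt3\,(R-(i-1)\delta)$, so its knee must stick out by $\tfrac12\sqrt{s^2-c^2}=\Theta(\sqrt{R\delta})\gg\delta$; folding it ``inward'' therefore drives layer $i$'s knee across layer $i{+}1$'s chord, whose apothem is only $\delta/2$ further in. The side edges are even worse: at $t=\tfrac12$ each side edge of length $\ell_i$ joins two points only $\delta$ apart, so its knee protrudes by roughly $\ell_i/2$, a macroscopic amount independent of $\delta$, which cannot be hidden inside gaps of total width $O(k\delta)$. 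There is also an internal inconsistency with the $\Pi$-antisymmetry you rely on to ``extend the motion for free'': an antisymmetric motion forces the state at $t=\tfrac12$ to be $\Pi$-symmetric, so a side-edge knee (perpendicular to its tiny radial chord) would have to lie horizontally, i.e.\ tangentially --- but near each base corner the base edges are only $O(\delta)/\sqrt3$ away in the tangential direction, so such a knee of length $\approx\ell_i/2$ crosses them; pointing the knees vertically avoids the base edges but destroys the symmetry your endgame argument assumes.

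The fallback you offer for the innermost conflict --- ``invoke the pass-through motion of Lemma~\ref{lem:pyramid}'' --- does not apply as stated: that lemma passes a \emph{single apex vertex} through a fixed triangle, with the midpoints of the three apex edges rotating about the base vertices, and its collision argument rests on the apex edges being short relative to the base. In your configuration the object crossing the base plane is a full-size folded triangular frame together with its hanging side edges, not a point, and you give no reduction to the lemma's hypotheses. This is precisely the difficulty the paper's proof is designed to avoid: by shrinking each layer to an $\epsilon$-size near-tetrahedral apex \emph{before} it passes through the (full-size, undeformed) triangle below, only one tiny object ever has to cross a triangle, and Lemma~\ref{lem:pyramid} applies verbatim. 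Your simultaneous-reflection scheme might be repairable along those lines (collapse the layers to near-points rather than nested full-size frames before they cross), but as written the collision argument at $t=\tfrac12$ --- which is the entire content of the lemma --- fails.
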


\begin{figure}[htbp]
\centering
\includegraphics[width=\arxiv{.75}\columnwidth]{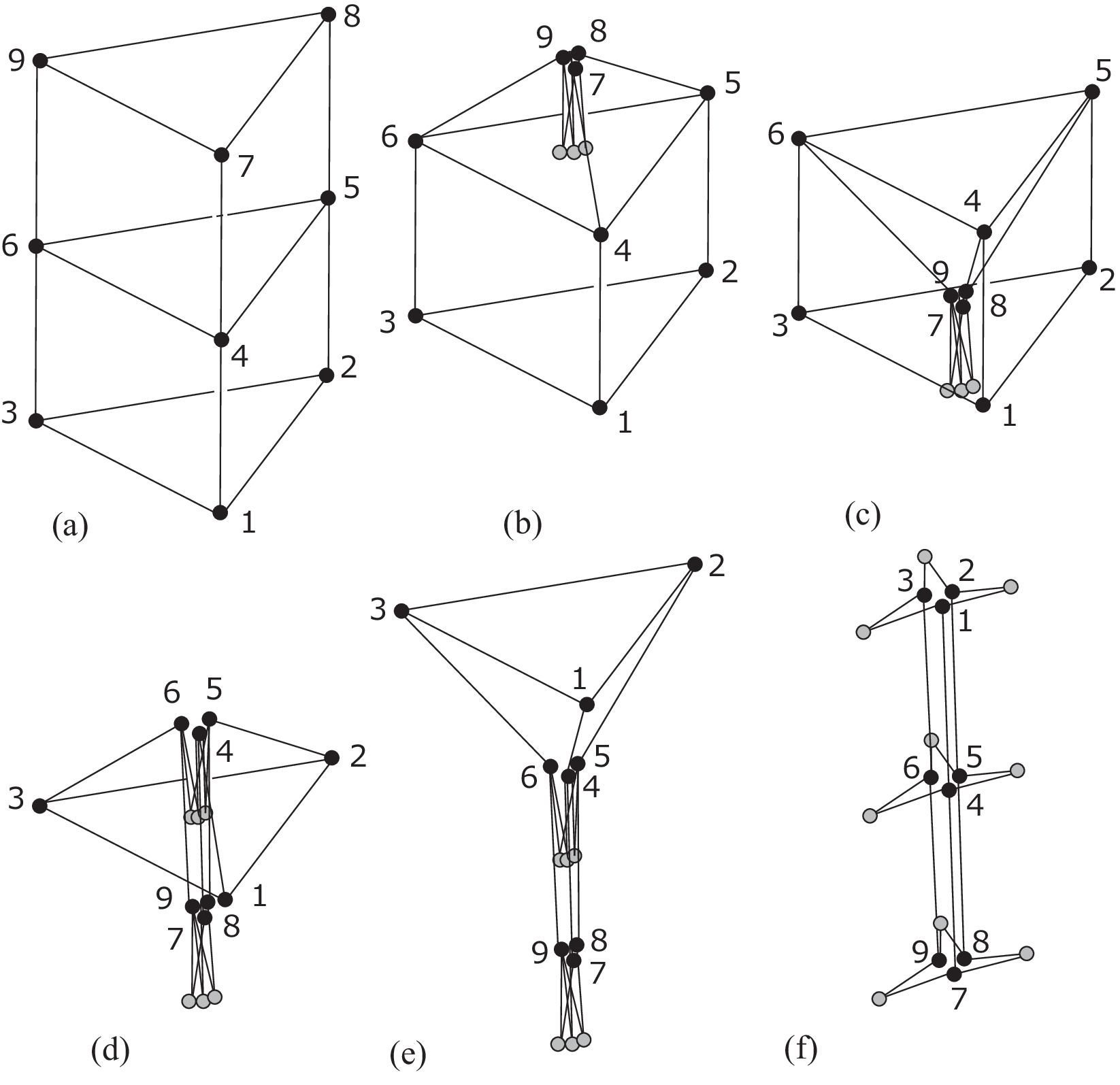}
\caption{(a) A linkage of an equilateral triangular prism with three stacks, where number $i$ stands for the vertex $v_i$; (b) The top stack transformed to a near tetrahedral linkage; (c) The near tetrahedral linkage transformed to the inside-out figure: (d) The second top stack transformed to a near tetrahedral linkage; (e) The near tetrahedral linkage transformed to the inside-out figure; (f) Edges folded in half moved in the outside.
}
\label{fig:prism-i-o}
\end{figure}

\begin{proof} The algorithm of the proof is as follows.
\begin{enumerate}
\item  Denote by $\gamma$ the central axis of $L$. Move three vertices $v_i \, (i=7,8,9)$ of the top stack toward $\gamma$ by rotation about $v_i \, (i=4,5,6)$, respectively until just before touching each other. Simultaneously fold three edges $v_7v_8, v_8v_9$ and $v_9v_7$ at their midpoints, hanging down from their end vertices. We call the resulting figure %
a {\em near-tetrahedral} linkage, which is attached by folded edges hanging down, as shown in Figure~\ref{fig:prism-i-o} (b).
\item Apply the motion ``inside-out" to the near-tetrahedral linkage, where edges hanging down move together with their end vertices as they are (Figure~\ref{fig:prism-i-o} (c)).
\item Apply a similar process to the second top stack. Then, we get a reversed near-tetrahedral linkage with edges hanging down from their end vertices (Figure~\ref{fig:prism-i-o} (d) and (e)).
\item Move folded edges at midpoints toward the outside of the cylinder until reaching the plane orthogonal to $\gamma$ (Figure~\ref{fig:prism-i-o} (f)).
\item Stretching folded edges, keeping vertical edges straight, enables turning the original figure of $L$ inside-out.%
\end{enumerate}~\\[-3.5em]%
\end{proof}

\smallskip
\noindent {\bf Remark}. In the above lemma, the continuous motion of inside-out can be applied to a more general triangular prism with an equilateral triangular base of edge length $l$  such that the height of each stack is greater than $(1/\sqrt3) l$ and less than $(2/\sqrt3) l$.

\subsection{General Algorithm}

\noindent The results from above can now be used to develop even more generic procedures for turning inside-out. The outline of our algorithm for turning equilateral linkages inside-out is as follows. Here and in what follows, we assume the edge length of a given linkage is one.

\begin{figure}
    \centering
    \includegraphics[width=\arxiv{.75}\linewidth]{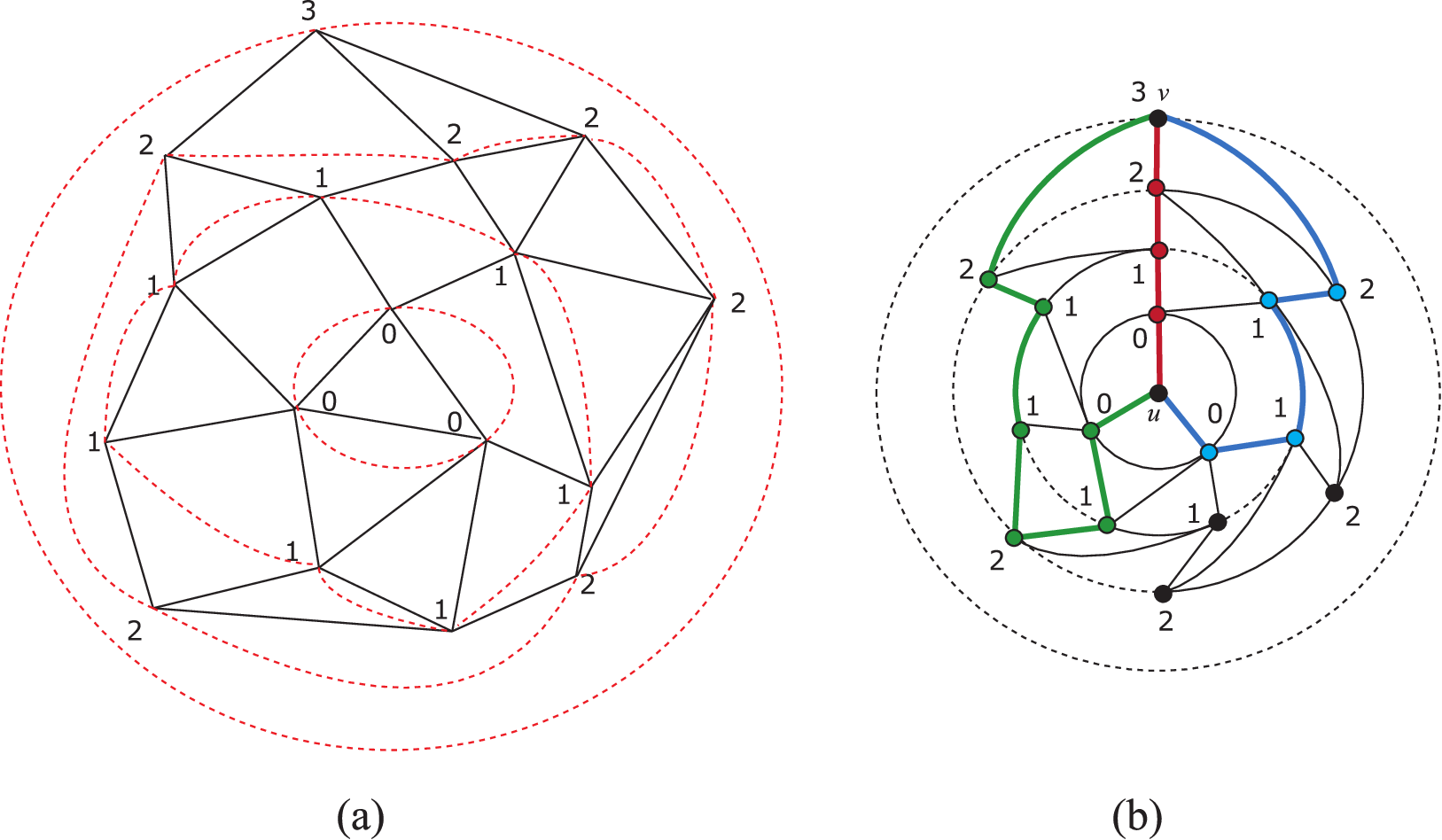}
    \caption{(a): Visualization of vertex sets $V_i$ according to their distance to $F$ obtained by breath-first search. (b): Horizontal edges are along the circle and vertical edges connect neighboring circles. By 3-connectivity there exist $3$ disjoint paths, as highlighted in color.}
    \label{fig:bfs}
\end{figure}

\begin{enumerate}
\item \emph{Choose any face $F$ of $L$ and any interior point $O$ of $F$}. Draw Euclidean $xyz$-axes with the origin $O$ such that $F$ is on the $xy$-plane and $L$ in the halfspace $\{z \ge 0 \}$.
\item \emph{Breadth-first search}. Divide the vertex set $V$ into subsets $\{V_i: 0 \le i \le \ell\}$ according to the edge distance from $F$.
We call an edge whose endpoints are in the same subset {\it horizontal edge} and any other edge \emph{vertical edge}. %
Figure~\ref{fig:bfs} (a) depicts vertex sets $V_i$ according to their distance $i$ to $F$; (b) shows the location of horizontal edges along the circles and vertical edges connecting neighboring circles.

\item \label{step:spiky} \emph{Transform $L$ into a near spiky ball, denoted by $nsb(L)$}, by stopping the continuous motion $\{L_t: 0 \le t \le 1\}$ as defined in the proof of Theorem \ref{th:convex}, just before the end.

\item \emph{Subdivide each $V_i$ into three groups $V_{i,1}$, $V_{i,2}$, and $V_{i,3}$, evenly in equator order}, where $V_{\ell,1}=V_{\ell}$, and $V_{\ell,2}$ and $V_{\ell,3}$ are empty in the case $|V_{\ell}| \le 2$.
We do so by exploiting 3-connectivity, implying the existence of three disjoint paths by Menger's Theorem~\cite{Menger27,Goring00}.
Note that the in-between ``vertical'' edges form a planar graph, see also Figure~\ref{fig:bfs} (b).
For details, we refer to the proof in Section \ref{pr:i-o} below. %

\item \label{step:stretch} Move vertices of {\it nsb(L)} by parallel transformation along $z$-axis to a thin cylindrical $({\ell}+1)$-sized stack, such that the vertices of $V_i$ are on the plane $z=i \,(1-\epsilon)$ with small enough $\epsilon >0$.
This works in multiple steps: First the maximum distance vertices in $V_{\ell}$ move up by $1-\epsilon$, then the farthest two layers $V_{{\ell}-1},V_{\ell}$ together move up by $1-\epsilon$, and so on.
Edges follow their end vertices, where their midpoints are located in the exterior of the cylinder; horizontal edges are located on the planes orthogonal to the $z$-axis. We call the resulting figure a {\it cylindrical linkage}, denoted by $cy(L)$.

\item  \emph{Transform $cy(L)$ to a near triangular prism with extra edges}, e.g., Figure~\ref{fig:icosa_i-o} (d) for a regular icosahedral linkage.
 To do so, move vertices so that the vertices in the same subgroup are located very close to each other, and $V_{i,1}$, $V_{i,2}$, and $V_{i,3}$ make a figure close to an equilateral triangle.
Denote by $ntp(L)$ the resulting figure of $L$.
Note that the height of $ntp(L)$ is smaller in this step (by some constant factor) so that the vertical edges can reach and still be at most unit length.

\item \emph{Transform the top part of $ntp(L)$ to a modified tetrahedral linkage with extra edges}. See Figure~\ref{fig:icosa_i-o_2} (b),
which is obtained from Figure~\ref{fig:icosa_i-o_2} (a) via a near tetrahedron, where midpoint triangles are rotated so they are inside, thereby including both horizontal and vertical edges.
Then, apply Lemma \ref{lem:pyramid} to the top part,
with some modifications to handle additional ``diagonal'' vertical edges
to also go down with the other edges.

\item \emph{Continue the above process until all stacks are turned inside-out.} Then shrink the top inverted tetrahedron: move the top three vertices toward the center of the cylinder, moving midpoints toward the outside and in the horizontal plane. The resulting figure is a very thin cylinder $cy(L)$. Move all horizontal edges to the exterior of the cylinder.

\item \emph
{Transform the resulting figure to the reflection of $nsb(L)$
(opposite of Step~\ref{step:stretch}), and then transform it to a convex polyhedron (opposite of Step~\ref{step:spiky})}, which is a mirror image of $L$ by the uniqueness of realization of a convex polyhedron up to congruence.
\end{enumerate}

\subsection{Example: Regular Icosahedron}
\noindent Before we discuss the precise proof of Theorem \ref{th:i-o}, we illustrate the algorithm based on the regular icosahedron (see Figures \ref{fig:icosa_i-o}. \ref{fig:icosa_i-o_1}. and \ref{fig:icosa_i-o_2}).
Let $L$ be the linkage of a regular icosahedron with vertex set $V=\{v_i: i=1,2,\dots, 20\} $ as shown in Figure~\ref{fig:icosa_i-o} (a) and (b).
\medskip

{\bf Steps 1 and 2.}  Choose any face $F$, say $F=[v_1v_2v_3]$, and the center of $F$ as the center $O$ of a spiky ball of $L$. We assume the edge length is one and $L$ is depicted in $\mathbb{R}^3$ with the origin $O$ and $xyz$-axes so that $F$ is on the $xy$-plane and $L$ in the halfspace $z \ge 0$.

{\bf Step 3.}  Apply Theorem \ref{th:convex} to $L$.  Stop the continuous motion of the spiky ball just before the end. Then we get a figure close to the spiky ball (Figure~\ref{fig:icosa_i-o} (c)), denoted by $nsb(L)$.  We will transform $nsb(L)$ to a kind of triangular prism as shown in Figure~\ref{fig:icosa_i-o} (d) in the next step.

{\bf Step 4.} Divide $V$ into groups according to the edge-distance from $F$ and denote them by
$$V_0=\{v_i: i=1, 2, 3\}, V_1=\{v_i: i=4, 5, 6, 7, 8, 9 \}, \rm{and} $$
$$V_2=\{v_i: i=10, 11, 12\}.$$
 Transform $nsb(L)$ to a thin cylindrical 3-sized stack, denoted by $cy(L)$, by moving the  vertices in $V_1$  and $V_2$ to the plane $\{z=1-\epsilon\}$ and the plane $\{z=2-2\epsilon\}$, respectively by translation parallel to the  $z$-axis, where $\epsilon$ is a enough small positive number (Figure~\ref{fig:icosa_i-o_1} (b).

 Each horizontal edge is located in the plane parallel to the $xy$-plane and each vertical edge folded at its midpoint is located outside so that its angle-bisector at the midpoint meets the $z$-axis.

Next, we will continuously transform $cy(L)$ to a thin cylindrical 3-sized stack $ntp(L)$, as shown in Figure~\ref{fig:icosa_i-o} (d).

\begin{figure}[htbp]\centering
\includegraphics[width=\arxiv{.75}\columnwidth]{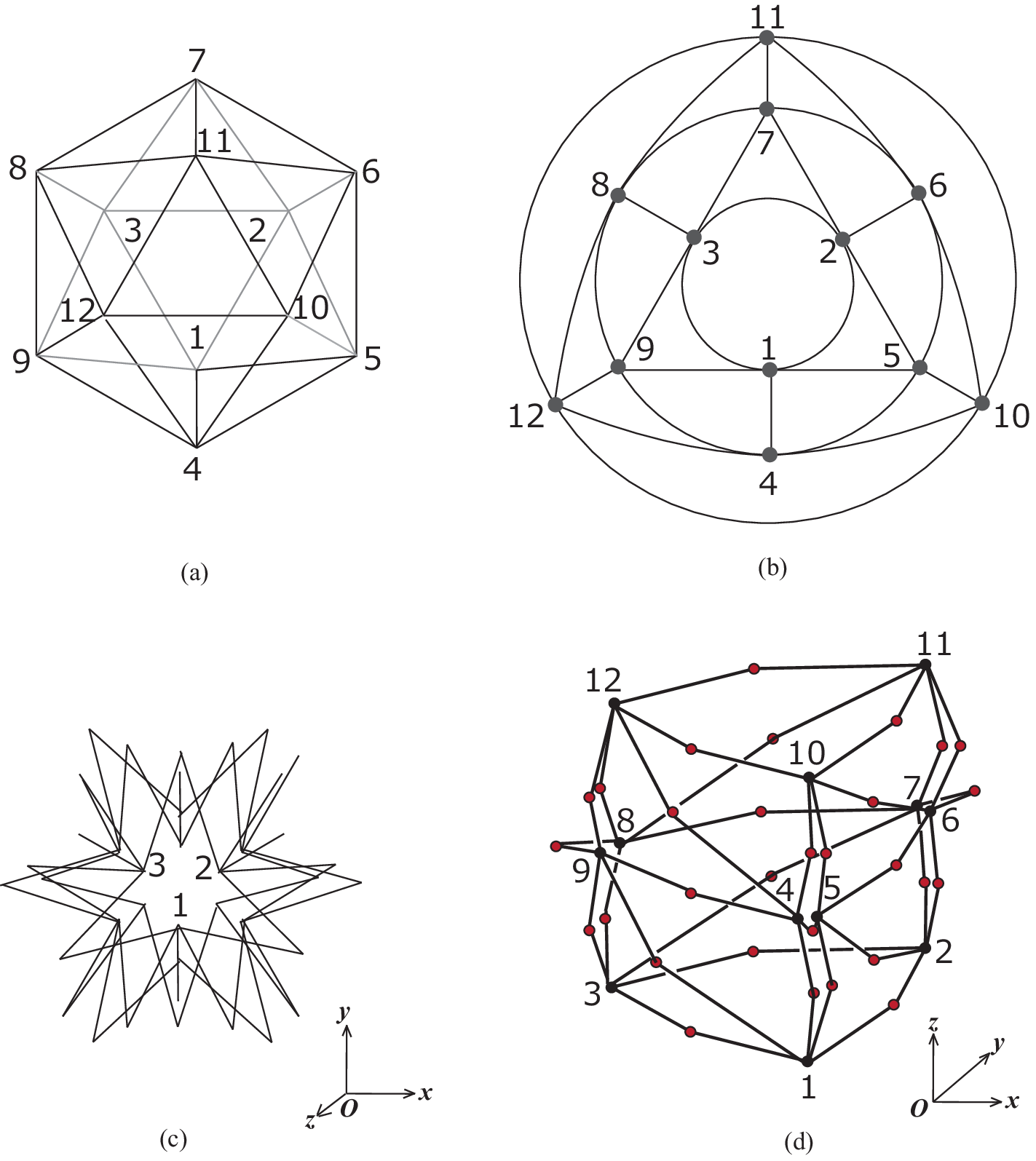}
\caption{(a) The linkage of a regular icosahedron in the 3-dimensional space; (b) The edge graph of a regular icosahedron (in general some circular edges can be missing); (c) A near spiky ball $nsb(L)$: (d) A near triangular prism linkage $ntpl (L)$. }
\label{fig:icosa_i-o}
\end{figure}

\begin{figure}[htbp]
\centering
\includegraphics[width=\arxiv{.75}\columnwidth]{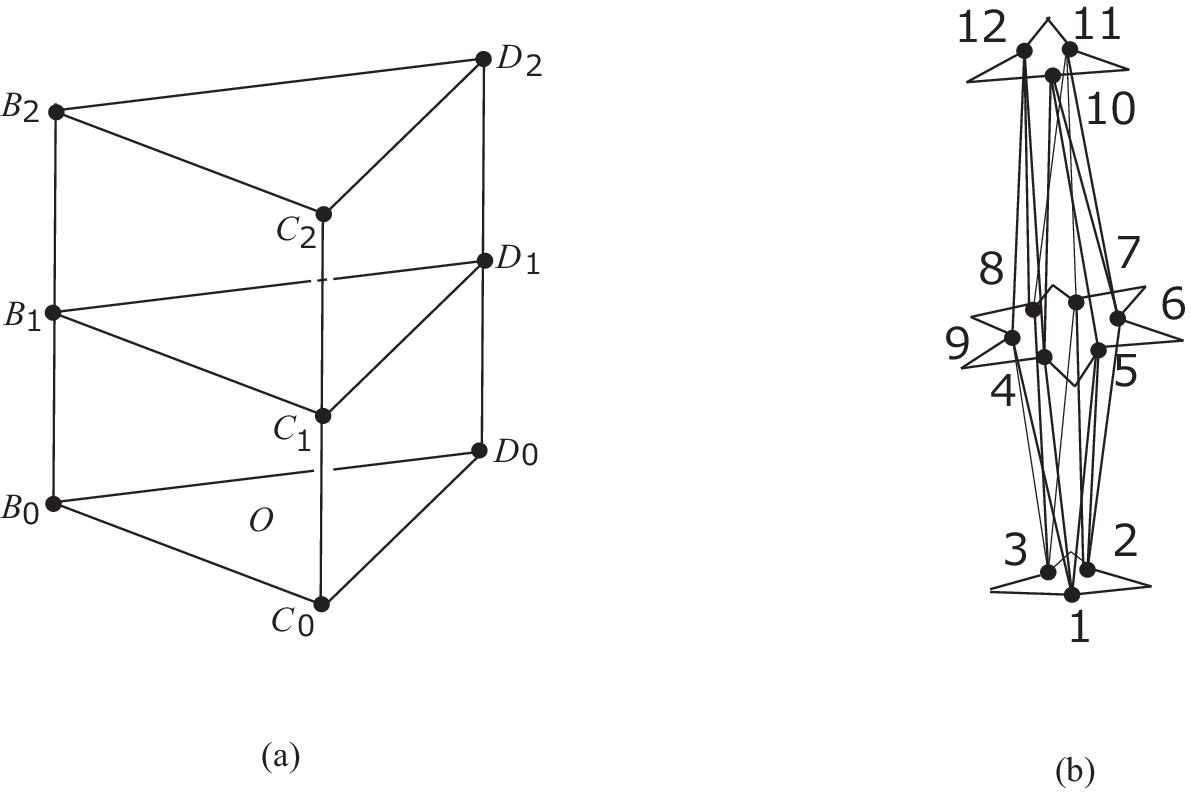}
\caption{(a) A triangular prism; (b) A 2-sized stack linkage $2-stack(L)$ of $L$. }
\label{fig:icosa_i-o_1}
\end{figure}

{\bf Step 5.} Subdivide each vertex set $V_i $ for $i=1, 2$, and $3$ into three groups
$\{V_{i,j}: j=1,2,3 \}$ evenly. Then, we continuously transform the $cy(L)$ to a near
 triangular prism linkage, denoted by $ntpl(L)$ in Figure~\ref{fig:icosa_i-o} (d). Note that $ntpl(L)$ contains a triangular prism as shown in Figure~\ref{fig:icosa_i-o_1} (a) used for passing and turning the linkage inside-out.

{\bf Step 6.} Fix the lower part of $ntp(L)$ and transform the upper part to make a modified tetrahedral linkage with folded edges in its exterior (see Figures \ref{fig:icosa_i-o_2}) (a) and (b). Then, transform the modified tetrahedral linkage inside-out by a similar motion as shown in Figure~\ref{fig:icosa_i-o_2} (c).

{\bf Steps 7--9.} By following similar processes used for the triangular prism linkage, we can continuously turn $L$ inside-out, as shown in Figure \ref{fig:icosa_i-o_2}, so we omit details.

\begin{figure}[htbp]
\centering
\includegraphics[width=\arxiv{.75}\columnwidth]{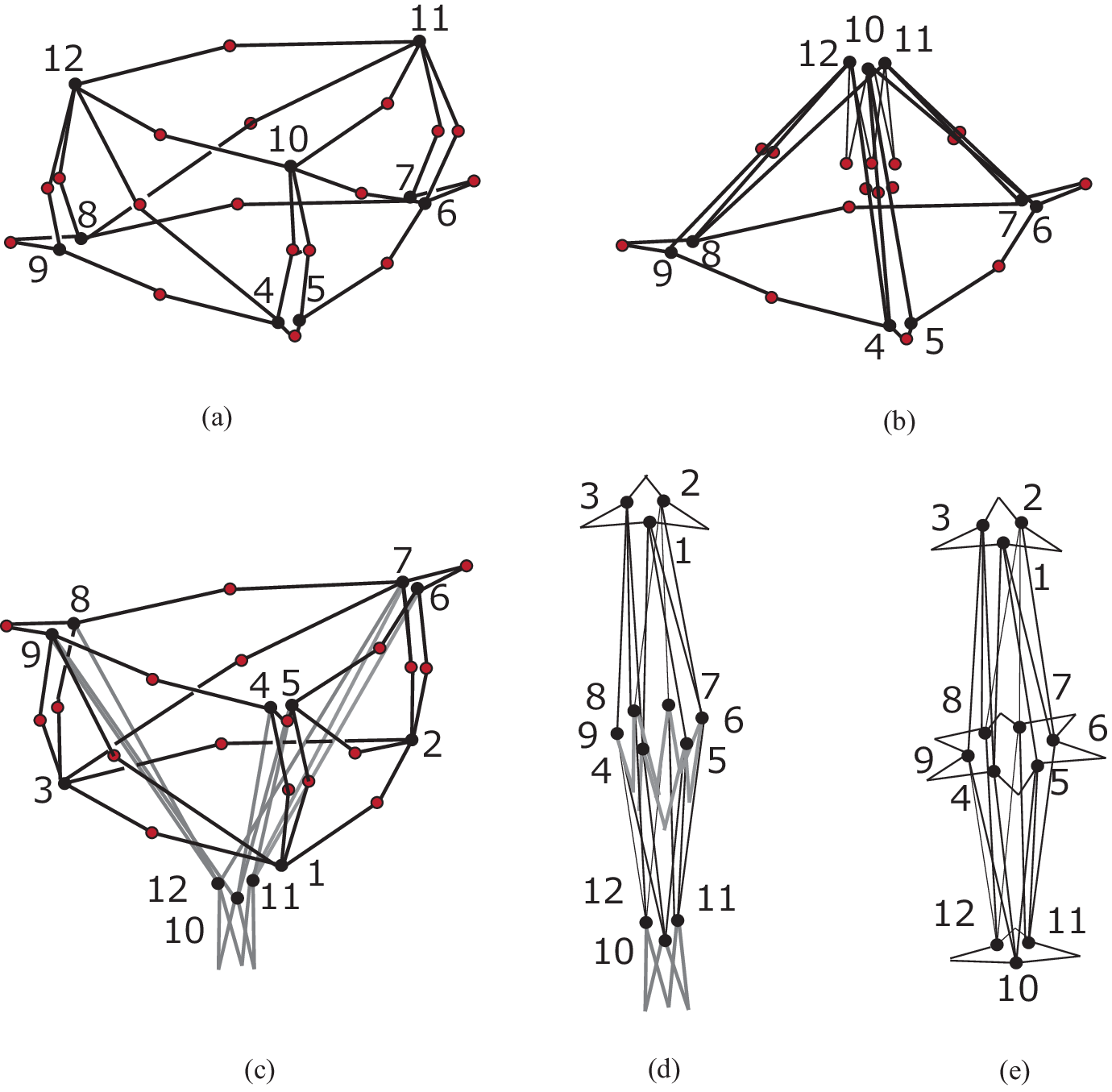}
\caption{(a) The upper part of the linkage; (b) Transform the upper part to a linkage of a modified triangular pyramid (tetrahedron) with folded edges; (c) After the operation of inside-out for the upper part; (d) After the operation of inside-out for the lower part; (e) The mirror image of Figure~\ref{fig:icosa_i-o} (a), obtained by moving horizontal edges
toward the exterior of the triangular cylinder.}
\label{fig:icosa_i-o_2}
\end{figure}

\subsection{Proof of Theorem~\ref{th:i-o}}\label{pr:i-o}
\noindent We assume the edge length of a given convex equilateral linkage $L$ is one from hereafter for the sake of simplicity.

\smallskip
{\bf Step 1.} Choose any face $F$ of $L$ and any interior point $O$ of $F$. Depict $L$ in the Euclidean space $\mathbb{R}^3$ with $xyz$-axes and the origin $O$ such that $F$ is on the $xy$-plane and $L$ are in the half space $\{z \ge 0\}$.

{\bf Step 2.}
Divide the vertex set $V$ into nonempty subsets $\{V_i: 0 \le i \le {\ell}\}$ according to the edge distance from $F$.
We call an edge whose both endpoints are in the same subset $V_i$ for some $i \,(0 \le i \le \ell)$ {\it horizontal edge} and the remaining edges  {\it vertical edges}, that is,  each of them has endpoints one in $V_i$ and the other in $V_{i+1}$ for some $0 \le i \le {\ell}-1$.

{\bf Step 3.}
Transform $L$ into a spiky ball by a continuous motion $\{L_t: 0 \le t \le 1\}$ defined in the proof of Theorem \ref{th:convex}. Stop the motion just before the end, e.g., $t=1-\epsilon$ for small enough $\epsilon >0$. We call the resulting linkage $L_{1-\epsilon}$ a {\it near-spiky ball}, denoted by {\it nsb(L)}.

{\bf Step 4.}
Subdivide each $V_l \, (0\le i \le {\ell})$ into three groups $V_{i,1}$, $V_{i,2}$, and $V_{i,3}$ evenly in equator order under the following condition, where $V_{\ell,1}=V_{\ell}$, and $V_{l,2}$ and $V_{\ell,3}$ are empty in the case $|V_{\ell}| \le 2$.
Consider the graph $G$ of the linkage $L$ and an extended graph $G'$ of $G$, which means $G$ is an induced subgraph of $G'$, such that

\begin{enumerate}
\item if  $|V_{\ell}| \ge 3$,  $G'$ has two more vertices $u$ and $v$ joining to all vertices in $V_0$ and $V_{\ell}$, respectively and

\item
if  $|V_{\ell}| \le 2$, $G'$ has one more vertex $u$ joining to all vertices in $V_0$.
\end{enumerate}

Then $G'$ is 3-connected by $|V_0| \ge 3$, and hence there are three internal-vertex-disjoint paths $P_j\, (j=1,2,3)$ joining any pair of vertices in $G'$ by Menger's Theorem \cite{Menger27,Goring00}. Choose $\{u, v\}$ as a pair of vertices where $v$ is any vertex in $V_l$ in the case of $|V_{\ell}| \le 2$.

Subdivide $V_i$ into $V_{i,j} (j=1,2,3)$ such that
 $$V_i \cap V(P_j) \subseteq V_{i,j},$$
where $V(P_j)$ means the vertex set in $P_j$, which is possible because $L$ is a convex polyhedral linkage.

By a small perturbation, move the vertices onto the surface of the cylinder around $z$-axis with radius $\epsilon$ such that for each $0 \le i \le \ell$  the vertices in $V_i$ are in the plane $\{z=\epsilon i /2 \} $. Simultaneously, the horizontal edges with endpoints in $V_i$ are moved together with $V_i$ targeting the positions in the plane $\{z=\epsilon i /2 \} $, and the midpoint of each vertical edge joining two vertices in $V_i$ and $V_{i+1}$ is moved to the position in the plane $\{z=\epsilon (i + 1/2)\} $ outside of the cylinder. The above motions can be continuous because $L$ is a convex equilateral polyhedral linkage.

{\bf Step 5.} Move vertices of the perturbed {\it nsb(L)} by parallel transformations along $z$-axis to a thin cylindrical $(\ell+1)$-sized stack, such that the vertices of $V_i$ are on the plane $z= i /3$ and edges are moved following their endpoints, where their midpoints are located in the exterior of the cylinder and horizontal edges are located on the planes orthogonal to $z$-axis. The motion can be continuous without collision because all edges folded in half are adjustable. We call the resulting figure a {\it cylindrical linkage}, denoted by $cy(L)$.

{\bf Step 6.}  To transform $cy(L)$ to a near triangular prism with extra edges, move vertices in $V_{i,1} \, (0 \le i \le \ell)$ to points in the circular arc $\Gamma_{i,1}$  obtained as the intersection of the circle of center $(0, 0, i/3)$ with radius $1/2 +\epsilon$ in the plane $z= i /3$,  and the sphere of center  $(1/2+\epsilon, 0, i /3)$ with radius $\epsilon$, where vertices keep the equator order. Similarly, Move vertices in $V_{i,2}$ and $V_{i,3}$ to points in the circular arc $\Gamma_{i,2}$ and $\Gamma_{i,3}$ obtained as rotated $\Gamma_{i,1}$ with the angles $\pm 2/3 \pi$ at $(0, 0, i /3)$ in the plane $z= i /3$. Then the distance of endpoints of each vertical edge is less than one because of $\{(\sqrt{3}(1/2 +\epsilon)\}^2 + (1/3)^2 < 1$ for $\epsilon$ very small. We call the resulting figure of $L$ a {\em near-triangular-prism linkage}, denoted by $ntp(L)$.

{\bf Step 7.} Transform the top part of $ntp(L)$ to a modified tetrahedral linkage (see Figure~\ref{fig:icosa_i-o_2} (b)). Apply Lemma \ref{lem:pyramid} such that horizontal edges of the top stack are rotated inside before the operation and then go down together with endpoints.

{\bf Step 8.} Continue the above process until all parts are turned inside-out. The result is a very thin $cy(L)$. Move all horizontal edges to the exterior of the cylinder.

{\bf Step 9.} Transform the resulting figure to nsb(L), and then transform it to a convex polyhedron, which is a mirror image of~$L$.

\section{Reversing Nonequilateral Linkages} \label{sc:i-o-noneq}
\noindent
What happens in the case of \emph{nonequilateral} convex polyhedral linkages?
We initialize research into this question by solving the asymptotic number
of required subdivisions for tetrahedra.
First we prove a general lower bound:

\smallskip
\begin{theorem}
To turn a polyhedron $P$ inside-out through a face $f$, each edge $e$ of $P$'s skeleton must be divided at least
$\Omega\left(\log \frac{\operatorname{length}(e)}{\operatorname{dist}(e,f)+\operatorname{circumf}(f)/2}\right)$.
\end{theorem}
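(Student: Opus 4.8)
The plan is to reduce the statement to a single edge and a single ``transit moment,'' and then to run a multiplicative (doubling) counting argument anchored at the fixed boundary loop of $f$. First I would normalize coordinates so that the reflecting plane $\Pi$ is the $xy$-plane, $f$ lies in $\Pi$, and $P$ starts in the closed half-space $\{z \ge 0\}$; reversing \emph{through} $f$ then means the final state is the mirror image in $\{z \le 0\}$ while the boundary cycle $\gamma := \partial f$ stays fixed in $\Pi$ for the whole motion. The only metric feature of $f$ I would use is that $\operatorname{diam}(\gamma)\le \operatorname{circumf}(f)/2$: the two endpoints of any chord split $\gamma$ into two complementary arcs, the shorter of which has length at most half the perimeter. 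Fixing an edge $e=AB$, set $L:=\operatorname{length}(e)$ and $r:=\operatorname{dist}(e,f)+\operatorname{circumf}(f)/2$, and let $A=p_0,p_1,\dots,p_m=B$ be the subdivided chain replacing $e$, so $m=k+1$ where $k$ is the number of subdivisions on $e$. The goal is $m=\Omega\!\left(\log(L/r)\right)$.

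The topological core is to exhibit a time $t^\ast$ at which the chain is forced to pass close to $\gamma$. Because $\gamma$ is fixed and bounds the fixed planar region $f$, while the endpoint of $e$ farthest from $\Pi$ travels from height $\sim L$ in $\{z>0\}$ to its mirror image in $\{z<0\}$, a linking/winding invariant of the chain (completed to a cycle by a path through the linkage back to $f$) relative to $\gamma$ must change. Since the motion is non-crossing and $\gamma$ never moves, I would argue that the chain can only reach the other side by threading the ``neck'' of horizontal extent $O(r)$ around $f$: at some $t^\ast$, every point where the chain meets $\Pi$ lies within horizontal distance $O(r)$ of $\gamma$. The constant $r=\operatorname{dist}(e,f)+\operatorname{circumf}(f)/2$ is exactly the scale at which the near endpoint of $e$ can be brought to $\gamma$ (distance $\operatorname{dist}(e,f)$) plus the intrinsic size of the hole ($\operatorname{diam}(\gamma)\le\operatorname{circumf}(f)/2$).

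Finally I would install dyadic shells $S_i:=\{\,2^{i}r\le \operatorname{dist}(\cdot,\gamma)\le 2^{i+1}r\,\}$ for $0\le i\le N$ with $2^{N}r\asymp L$, so $N=\Theta(\log(L/r))$, and track, near $t^\ast$, the order in which successive joints $p_j$ leave the upper half-space through the neck. The claim I want is a \emph{reeling lemma}: while the chain is threaded through a neck of size $O(r)$ and kept non-crossing, each additional link can advance the reach of the active joint across at most one shell, i.e.\ by at most a constant multiplicative factor, rather than by an additive link length. Granting this, covering the radial range from $r$ out to $\sim L$ needs at least $N$ links, giving $m\ge N=\Omega(\log(L/r))$ and hence the stated bound on the number of subdivisions.

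I expect the reeling lemma to be the main obstacle. A single long straight link in free space reaches distance equal to its length, so the multiplicative (``one shell per link'') behavior cannot be purely local to one link; it must come from the fact that, at the transit moment, the whole threaded portion of the chain is confined near the neck and must reverse orientation relative to $\gamma$, so the reachable region on the far side after adding a link is essentially a bounded-factor dilation of the previous reachable region (roughly, a reflection across the neck). Making this precise — and in particular showing that the gain is multiplicative and not merely additive, which is what separates the true $\Theta(\log)$ answer for tetrahedra from a weaker linear bound — is the delicate step, and it is where non-crossing, continuity, and the fixed loop $\gamma$ must be combined quantitatively.
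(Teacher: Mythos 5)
Your proposal has a genuine gap, and you have named it yourself: the ``reeling lemma'' is the entire content of the theorem, and it is left unproven. As you observe, a single link in free space advances reach additively by its full length, so the lemma cannot be local; but your sketch supplies no mechanism that converts confinement near the neck into the claimed one-shell-per-link bound, and the dyadic bookkeeping does not create one. Two further steps would fail as stated. First, your normalization assumes the boundary cycle $\gamma=\partial f$ stays pointwise fixed in $\Pi$ throughout the motion; this is not without loss of generality, since $f$ is a cycle of bars on universal joints that can flex and drift during a reversing motion, and a lower bound must hold against \emph{all} motions, including those that move $f$. (A time-dependent rigid motion can pin one vertex, but not the whole cycle.) Second, your argument hinges on a single ``transit moment'' $t^\ast$ at which the entire threaded portion of the chain is confined near $\gamma$; no such moment need exist, because different pieces of the chain can pass through the face at widely separated times.

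The paper closes all of this with a much more elementary ``knitting needles'' tether argument that replaces your reeling lemma. Fix a vertex $v$ of $f$ and let $p$ be the shortest path in the skeleton from $v$ to the endpoint $u$ of $e$, of length $\operatorname{dist}(e,f)$. Since bars are inextensible, $u$ stays within distance $\operatorname{dist}(e,f)$ of $v$ at \emph{all} times, and the $i$-th subdivision point $w_i$ stays within distance $\operatorname{dist}(e,f)+\ell_1+\cdots+\ell_i$ of $v$, where $\ell_j$ are the chunk lengths of $e$. The only topological input is pointwise and per-joint: each $w_i$ must itself pass through the face region, whose diameter is at most $\operatorname{circumf}(f)/2$, so at some moment (possibly a different moment for each $i$) $w_i$ lies within $r_0=\operatorname{dist}(e,f)+\operatorname{circumf}(f)/2$ of $v$. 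The triangle inequality applied to the chunk $w_{i}w_{i+1}$ then forces $\ell_{i+1} \le 2\bigl(r_0+\ell_1+\cdots+\ell_i\bigr)$, so the partial sums grow at most geometrically and $\Omega\bigl(\log(\operatorname{length}(e)/r_0)\bigr)$ chunks are needed to exhaust $e$. Note where the multiplicative growth comes from: not from any bound on how far the chain can ``reach'' per link, but from the fact that each joint is tethered to $v$ by intrinsic path lengths while being obligated to revisit an $r_0$-ball. If you want to salvage your outline, this tether-plus-per-joint-piercing observation is what should stand in place of the reeling lemma and the fixed-$\gamma$, single-$t^\ast$ setup.
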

\begin{proof}
This proof is based on the ``knitting needles'' argument
\cite[Section~6.3.1]{demaine2007geometric}.
Refer to Figure~\ref{fig:noneqlb}.
Let $p$ be the shortest path from a vertex of $f$ to a vertex $u$ of $e$,
which has length $\operatorname{dist}(e,f)$.
Draw a sphere $S_0$ centered at any vertex of $f$ of radius
$r_0 = \operatorname{dist}(e,f) + \operatorname{circumf}(f)/2$.
Then the path $p$ must remain inside the sphere $S$, given its short length.
If the next subdivided chunk of edge $e$ has length $> 2 r_0$,
then the other endpoint will always be outside $S$, which means this edge can never pierce $f$ as required.
Thus the next subdivided chunk has length $\ell_1 \leq 2 r_0$.
Similarly, the next subdivided chunk has length $\ell_2 \leq 2 (r_0 + \ell_1)$;
then $\ell_3 \leq 2 (r_0 + \ell_1 + \ell_2)$; and so on.
The $\ell_i$ upper bounds increase exponentially,
so we need $\Omega\left(\log \frac{\operatorname{length}(e)}{r_0}\right)$
subdivisions before we finish the edge.
\end{proof}

\begin{figure}[h]
    \centering
    \includegraphics[width=\arxiv{.75}\columnwidth]{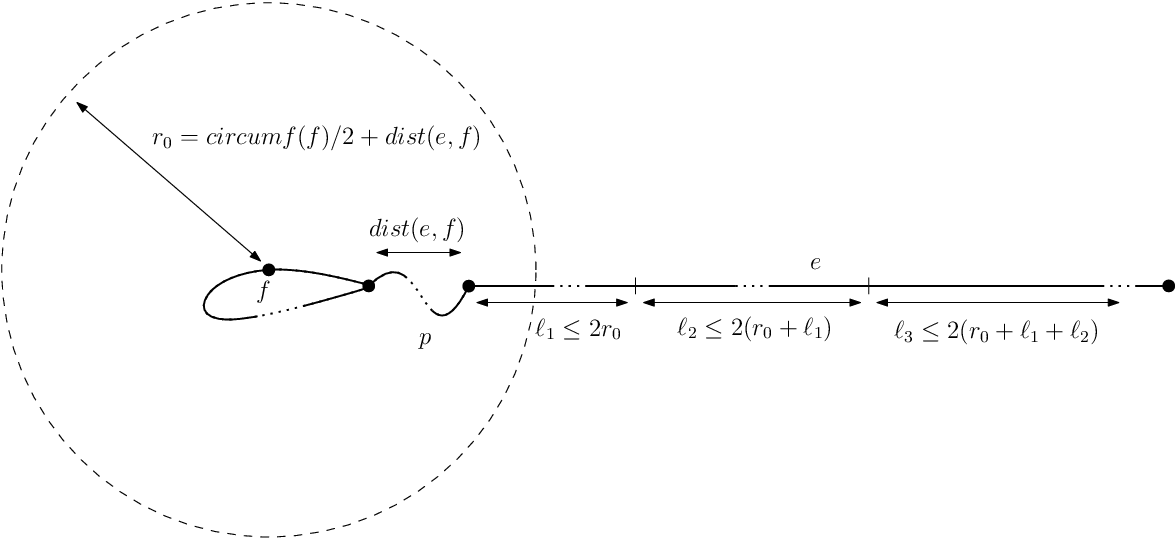}
    \caption{Visualization of largest lower bound constraints for an edge~$e$ through a face~$f$.}
    \label{fig:noneqlb}
\end{figure}

Now we prove a matching upper bound for tetrahedra, up to constant factors:

\smallskip
\begin{theorem}
A tetrahedron with edge lengths between $1$ and $L$ can be turned inside-out
through a specified face~$f$ using only $O(\log L)$ subdivisions.
\end{theorem}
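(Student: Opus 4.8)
The plan is to run the knitting-needles lower bound in reverse: that argument shows an edge can thread past $f$ only when its successive chunks grow geometrically, and I would place the subdivisions exactly at the breakpoints that make this growth tight. Fix $f = BCD$ in the plane $z=0$ with the apex $A$ in $\{z>0\}$; reversing through $f$ fixes $B,C,D$ and sends $A$ to its reflection $A'$. Writing $r_0 = \operatorname{circumf}(f)/2$ (the lower-bound radius, since the lateral edges meet $f$ and so have $\operatorname{dist}=0$), I would subdivide each lateral edge $AB$, $AC$, $AD$ into chunks $\ell_1, \ell_2, \dots$ measured from the base outward, with $\ell_i \approx r_0\,\rho^{\,i}$ for a fixed ratio $\rho$ slightly below $2$. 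Since the chunk lengths grow by a constant factor and sum to at most $L$, each lateral edge needs only $O(\log(L/r_0)) = O(\log L)$ subdivision points; the three edges of $f$ have total length $\operatorname{circumf}(f)$, so $O(1)$ subdivisions there suffice, matching the lower bound edge by edge.

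The reversal itself I would carry out as an ``inside-out from the core outward'' motion, an inductive version of Lemma~\ref{lem:pyramid}. The three chains from $A$ to $B,C,D$ form a flexible tripod over the fixed base; bending only at the breakpoints, I would first flip the innermost short sub-linkage (the first chunk of each leg, together with $f$) through the base plane using the pyramid move of Lemma~\ref{lem:pyramid} in the generalized form of its Remark, which applies because the innermost chunk length $\Theta(r_0)$ is small relative to the distances from $B,C,D$ to the foot of the flip. Each subsequent round then flips the next chunk of all three legs simultaneously: the already-reversed inner portion now occupies depth $\approx \ell_1 + \cdots + \ell_{i-1}$ below $f$, and this is exactly the clearance the pyramid move needs for the length-$\ell_i$ chunk, since the schedule enforces $\ell_i \le 2(r_0 + \ell_1 + \cdots + \ell_{i-1})$. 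After $m = O(\log L)$ rounds the entire apex region has passed through to the mirror side, and a final ``unfold'' that straightens the accordion recovers the straight mirror-image tetrahedron.

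The main obstacle is collision-freeness of the simultaneous three-leg motion rather than the counting, which the geometric schedule already settles. Three points need care: (i) the three legs must fold and pivot in a coordinated, symmetric way so that the hanging folded chunks stay in disjoint exterior wedges (as in the ``edges stick outside'' analysis of Theorem~\ref{th:convex} and Lemma~\ref{lem:prism}) and never meet near the central axis; (ii) because a nonequilateral $f$ may be obtuse, the foot of the flipping apex need not lie inside $BCD$, so I would either invoke the generalized conditions of the Remark after Lemma~\ref{lem:pyramid} or precede the first flip by a bounded adjustment that brings the effective foot inside, absorbing the skew into the constants; and (iii) I must verify that each pivoting chunk clears both the base triangle and the previously flipped chunks, which is precisely where the strict inequality $\rho < 2$ (equivalently the strict clearance $|OB| > \ell_i/2$ of Lemma~\ref{lem:pyramid}) is used. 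Making the sticky-model, crossing-free bookkeeping of these nested pivots precise is the real content; the logarithmic bound then follows immediately from the geometric chunk schedule.
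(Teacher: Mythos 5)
Your proposal is correct and takes essentially the same approach as the paper: the identical exponentially growing chunk schedule (each chunk bounded by roughly twice the sum of the already-threaded length plus the circumradius of $f$, giving $O(\log L)$ chunks per lateral edge) combined with threading the three legs through $f$ chunk by chunk. The only difference is cosmetic framing—the paper describes the motion as ``pushing'' the three triangles through the center of $f$ with increments of $1/\sqrt{3}$ (the circumradius of the unit base), while you phrase each round as a generalized flip in the spirit of Lemma~\ref{lem:pyramid}—and both treatments leave the collision-freeness bookkeeping at a comparable level of informality.
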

\begin{proof}
    We focus on the worst case where $f$ is a triangle with edge lengths $1$ and $L\gg 1$.

    We start with the easier case where
    instead of three triangles connected to $f$, there is a single segment of length $L$ we want to push through $f$; see Figure~\ref{fig:nonequilateral}(a).
    The first subdivision is after length $\ell_1 = 1$, ensuring the segment fits through $f$. The next subdivision is after length $\ell_2 = \ell_1 + 1$, the next ones are after $\ell_3 = \ell_1 + \ell_2 + 1$, $\ell_4 = \ell_1 + \ell_2 + \ell_3 + 1$, and so on. The length $\ell_i$ grows exponentially in $i$, which is easy to observe from %
    the Fibonacci recurrence. Consequently, the presented approach uses $O(\log L)$ subdivisions.

    Figure~\ref{fig:nonequilateral}(b) shows the same technique with triangles instead of a single segment. The method works equivalently, but we need more space to push the triangles down instead of $\ell_1=1$ and $+1$ in the above computation. Indeed, we push all three triangles into the center position, resulting in $\ell_1=1/\sqrt(3)$ and $+1/\sqrt(3)$ (circumradius of size-$1$ triangle). This concludes the proof for the tetrahedron.
\end{proof}

\begin{figure}
    \centering
\includegraphics[width=\arxiv{.75}\columnwidth]{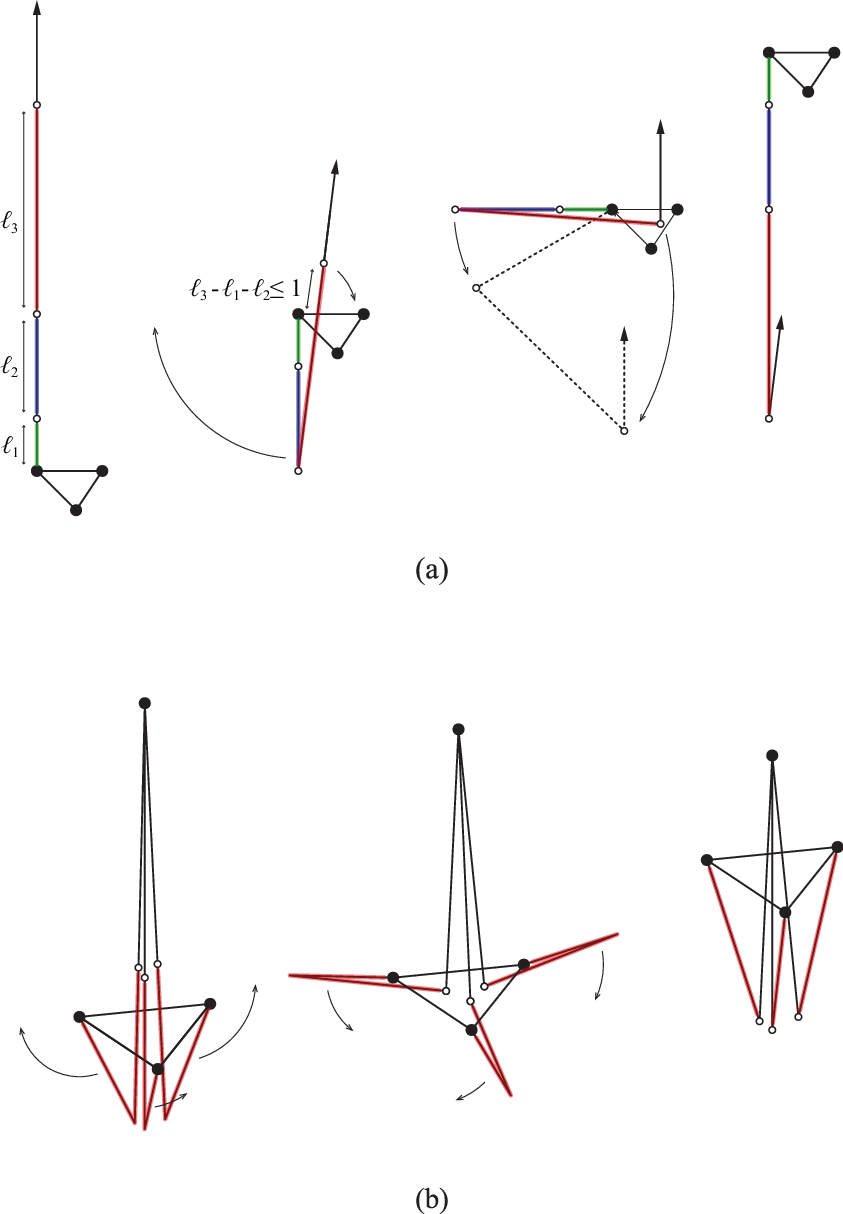}
    \caption{(a): Logarithmic nonequilateral algorithm for turning a single segment on top of an equilateral triangle inside-out. (b): Logarithmic nonequilateral algorithm for turning a tetrahedron inside-out by pushing through the equilateral base triangle of size $1$.}
    \label{fig:nonequilateral}
\end{figure}

\section{Open Problems}\label{sec:open}

\noindent In addition to the previous section (nonequilateral), one
could even further generalize to nonconvex polyhedra or
arbitrary linkages.
A finite upper bound (but lots of subdivision) is probably possible.
For flattening, does midpoint subdivision always suffice,
or $O(n)$ subdivision,
or is there a counterexample where more subdivision is needed?
Is it NP-hard or even $\exists \mathbb R$-hard to decide whether
a given linkage can be continuously flattened or turned inside-out?

\bibliography{bib}

\journal{
\begin{biography}
\profile[photos/edemaine]{Erik D. Demaine}{received a B.Sc. degree from Dalhousie University in 1995,
and M.Math. and Ph.D. degrees from University of Waterloo in 1996 and 2001, respectively. Since 2001, he has been a
professor in computer science at the Massachusetts Institute of Technology. His
research interests range throughout algorithms, from data structures for improving web searches to the geometry of understanding how proteins fold to the computational
difficulty of playing games. In 2003, he received a MacArthur Fellowship as a ``computational geometer tackling and solving difficult problems related to folding and bending—moving readily
between the theoretical and the playful, with a keen eye to revealing the former in the latter''. He cowrote a book about the theory of folding, together with Joseph O’Rourke (\textit{Geometric Folding
Algorithms}, 2007), and a book about the computational complexity of games, together with Robert Hearn (\textit{Games, Puzzles, and
Computation}, 2009).}
\profile[photos/mdemaine]{Martin L. Demaine}{is an artist and mathematician. He started the first private
hot glass studio in Canada and has been
called the father of Canadian glass. Since
2005, he has been the Angelika and Barton Weller Artist-in-Residence at the Massachusetts Institute of Technology. Both
Martin and Erik work together in paper,
glass, and other material. They use their exploration in sculpture
to help visualize and understand unsolved problems in mathematics, and their scientific abilities to inspire new art forms. Their
artistic work includes curved origami sculptures in the permanent
collections of the Museum of Modern Art (MoMA) in New York,
and the Renwick Gallery in the Smithsonian. Their scientific
work includes over 100 published joint papers, including several
about combining mathematics and art.}
\profile[photos/hecher_photo]{Markus Hecher}{is a postdoctoral researcher at MIT CSAIL working with Erik Demaine. He received a binational PhD in computer science from the Vienna University of Technology (Austria) and the University of Potsdam (Germany). His main interests are counting problems, computational complexity, and utilizing new insights into the hardness of combinatorial problems to improve existing algorithms.}
\profile[photos/rebecca]{Rebecca Lin}{is a Ph.D. student at MIT CSAIL working with Erik Demaine. She received her B.Sc. in Computer Science from the University of British Columbia, where she was advised by William Evans. Her research explores geometrical problems in art, design, and fabrication.}%
\profile[photos/vluo]{Victor Luo}{graduated MIT Class of 2023 with a B.S. in Mathematics and EECS, and received an M.Eng. under Erik Demaine in 2024. %
He has interned with Webstaff Co., Ltd. in Shibuya, and currently works as a software engineer at Doppel.
In his spare time, he enjoys playing rhythm games and square dancing.
}
\profile[photos/selfphto_nara]{Chie Nara}{received her B.A., M.S., and Ph.D. degrees from Ochanomizu University in Tokyo.
She served as a professor at Tokai University before taking up the current research position at Meiji University. Her
research fields are functional analysis, graph theory, and discrete geometry.}
\end{biography}}

\end{document}